\documentclass[letterpaper, 10 pt, conference]{ieeeconf}  

\IEEEoverridecommandlockouts                              
\usepackage{graphicx} 
\usepackage{amsmath} 
\usepackage{amssymb}  
\usepackage{algorithm2e}
\usepackage{booktabs}
\usepackage{url}
\usepackage{tikz}
\usepackage{siunitx}
\usepackage{stfloats}
\renewcommand{\baselinestretch}{0.965}
\usetikzlibrary{arrows.meta, positioning, decorations.pathreplacing, calc}
\definecolor{nodeblue}{RGB}{70,110,160}
\definecolor{edgegray}{RGB}{180,180,180}
\definecolor{bitgold}{RGB}{190,145,40}
\definecolor{fwblue}{RGB}{55,90,140}
\definecolor{contourgreen}{RGB}{80,130,90}
\definecolor{vertexred}{RGB}{180,60,60}

\usepackage{microtype}
\usepackage[T1]{fontenc}
\newif\ifsubmission

\usepackage{hyperref}
\ifsubmission
  \hypersetup{
    colorlinks  = false,
    pdfborder   = {0 0 0}
  }
\else
  \hypersetup{
    colorlinks  = true,
    linkcolor   = [RGB]{30,80,160},
    citecolor   = [RGB]{30,80,160},
    urlcolor    = [RGB]{30,80,160}
  }
\fi

\hypersetup{
  pdftitle      = {Making Every Bit Count for A-Optimal State Estimation},
  pdfauthor     = {C. Khanpour, D. Turizo, S. Talkington},
  pdfkeywords   = {A-optimal, state estimation, Frank-Wolfe, power systems},
  bookmarksnumbered = false,
  bookmarksopen     = false,
  pdfpagemode       = UseNone,
  pdfstartview      = FitH,
  pdfcreator        = {LaTeX with ieeeconf},
  pdfproducer       = {pdflatex},
  unicode           = true,
  breaklinks        = true
}

\let\ieeeproof\proof
\let\endieeeproof\endproof
\let\proof\relax
\let\endproof\relax
\usepackage{amsthm}
\usepackage{dsfont}
\usepackage{stmaryrd}

\newtheorem{theorem}{Theorem}
\newtheorem{proposition}{Proposition}
\newtheorem{lemma}{Lemma}
\newtheorem{corollary}{Corollary}[proposition]

\theoremstyle{definition}

\theoremstyle{remark}
\newtheorem*{remark}{Remark}

\DeclareSymbolFont{sansops}{OT1}{\sfdefault}{m}{n}
\SetSymbolFont{sansops}{bold}{OT1}{\sfdefault}{b}{n}

\makeatletter
\renewcommand\operator@font{\mathgroup\symsansops}
\makeatother

\DeclareSymbolFont{sfoperators}{OT1}{cmss}{m}{n}
\DeclareSymbolFontAlphabet{\mathsf}{sfoperators}

\makeatletter
\def\operator@font{\mathgroup\symsfoperators}
\makeatother

\newcommand{\R}{\mathbb{R}}

\newcommand{\Z}{\mathbb{Z}}

\newcommand{\E}{\operatorname{ E}}

\renewcommand{\j}{\mathrm{j}}

\newcommand{\vct}[1]{\boldsymbol{#1}}
\newcommand{\mtx}[1]{\boldsymbol{#1}}
\newcommand{\diag}{\operatorname{diag}}

\newcommand{\<}{\langle}
\renewcommand{\>}{\rangle}

\newcommand{\T}{\top}

\newcommand{\trace}{\operatorname{ tr}}

\newcommand{\ip}[2]{\left\<#1, #2\right\>}

\newcommand{\Expec}[2][]{\E_{#1}\left[#2\right]}

\newcommand{\p}[1]{\left(#1\right)}

\newcommand{\set}[1]{\mathcal{#1}}

\DeclareMathOperator*{\argmin}{\operatorname{arg~min}}
\DeclareMathOperator*{\argmax}{\operatorname{arg~max}}

\newcommand{\vb}{\vct{b}}

\newcommand{\vd}{\vct{d}}
\newcommand{\ve}{\vct{e}}

\newcommand{\vg}{\vct{g}}
\newcommand{\vh}{\vct{h}}

\newcommand{\vr}{\vct{r}}
\newcommand{\vs}{\vct{s}}

\newcommand{\vx}{\vct{x}}
\newcommand{\vy}{\vct{y}}
\newcommand{\vz}{\vct{z}}

\newcommand{\vepsilon}{\vct{\epsilon}}

\newcommand{\vkappa}{\vct{\kappa}}

\newcommand{\vmu}{\vct{\mu}}

\newcommand{\vxi}{\vct{\xi}}

\newcommand{\vrho}{\vct{\rho}}

\newcommand{\vzero}{\vct{0}}
\newcommand{\vone}{\vct{1}}

\newcommand{\mA}{\mtx{A}}

\newcommand{\mC}{\mtx{C}}
\newcommand{\mD}{\mtx{D}}

\newcommand{\mH}{\mtx{H}}
\newcommand{\mI}{\mtx{I}}

\newcommand{\mM}{\mtx{M}}

\newcommand{\st}{\operatorname{\sf s.t. }}

\let\proof\ieeeproof
\let\endproof\endieeeproof

\newcommand{\vxhat}{\vct{\hat{x}}}

\newcommand{\floor}[1]{\left\lfloor #1 \right\rfloor}
\renewcommand{\epsilon}{\varepsilon}

\title{\LARGE \bf
Making Every Bit Count for $A$-Optimal State Estimation
}

\author{Cameron Khanpour,\quad Daniel Turizo,\quad Samuel Talkington
\thanks{This material is based upon work supported by the National Science Foundation Graduate Research Fellowship Program under Grant No. DGE-2039655. Any opinions, findings, and conclusions or recommendations expressed in this material are those of the author(s) and do not necessarily reflect the views of the National Science Foundation.}
\thanks{C. Khanpour and S. Talkington are with the School of Electrical and Computer Engineering,
        Georgia Institute of Technology, Atlanta, GA 30332, USA.
        {\tt\small ckhanpour3@gatech.edu, talkington@gatech.edu}}%
\thanks{D. Turizo is with SimpleRose, Inc.,
        St. Louis, MO 63101, USA.
        {\tt\small daniel.turizo@simplerose.com}}%
}

\begin{document}

\maketitle
\thispagestyle{empty}
\pagestyle{empty}

\begin{abstract}
We study the problem of controlling how a limited communication bandwidth budget is allocated across heterogeneously quantized sensor measurements. The performance criterion is the trace of the error covariance matrix of the linear minimum mean square error (LMMSE) state estimator, i.e., an $A$-optimal design criterion. Minimizing this criterion with a bit budget constraint yields a nonconvex optimization problem. We derive a formula that reduces each evaluation of the gradient to a single Cholesky factorization. This enables efficient optimization by both a projection-free Frank--Wolfe method (with a computable convergence certificate) and an interior point method with L-BFGS Hessian approximation over the problem's continuous relaxation. A largest remainder rounding procedure recovers integer bit allocations with a bound on the quality of the rounded solution. Numerical experiments in IEEE power grid test cases with up to 300 buses compare both solvers and demonstrate that the analytic gradient is the key computational enabler for both~methods. Additionally, the heterogeneous bit allocation is compared to standard uniform bit allocation on the 500 bus IEEE power grid test case. 
\end{abstract}

\section{Introduction}
\label{sec:intro}

Analog-to-digital \textit{quantization} methods improve computational and communication efficiency by mapping continuous measurements to discrete intervals \cite{gray_quantization_survey_1998}. While coarse quantization accelerates computations \cite{gholami2022survey}, it introduces \textit{nonlinear}, typically \textit{non-Gaussian}, measurement noise \cite{plan_generalized_2016,thrampoulidis_quantized_lasso_2020}, which contrasts with the additive Gaussian noise assumptions that are frequently used for analyzing state estimation algorithms.

At the same time, sensing technologies are increasingly widespread in modern engineering practice. For example, advanced metering infrastructure (AMI) has become widespread in electric power distribution infrastructure, where \textit{variable precision} sensing has emerged as a promising operational paradigm for communication-constrained sensing. To address these challenges, we cast heterogeneous quantizer bit allocation under a global bandwidth budget as an $A$-optimal design problem for linear minimum mean square error (LMMSE) estimation\textemdash i.e., minimizing the trace of the covariance matrix of the error of the LMMSE estimator. The resource allocation acts on measurement precisions and induces a nonconvex optimization problem that falls outside the scope of existing sensor selection methods.

\paragraph{Related Work}
There are a number of related works. In \cite{joshi2009sensor}, the discrete sensor placement problem was studied under a $D$-optimal criterion; the relaxed problem is solved without guarantees of optimality. Budget-constrained $D$-optimal design was recently studied in~\cite{wang_algorithms_2025}. Similarly, \cite{krause_efficient_2008} applied sensor placement for water networks; they focus on a class of objective functions that possess other structured properties, such as submodularity (this property is useful to prove theoretical guarantees of greedy based algorithms). While $A$-optimal design does not have (super)modularity in general, recent research in \cite{chamon_approximate_2017} showed that $A$-optimal design enjoys \textit{approximate} supermodularity. Combinatorial approximation algorithms for $A$-optimal design have also been developed~\cite{madan_combinatorial_2019,nikolov_aoptimal_2019}. Frank--Wolfe algorithms~\cite{frank1956fw} for the classical $A$-optimal design problem were analyzed in~\cite{ahipasaoglu_first-order_2015}, where the information matrix is linear in the design weights and the problem is convex. The exponential mapping between sensor bitrate and precision in our formulation breaks this linearity and introduces nonconvexity, requiring a different convergence analysis. The work of~\cite{arce_convexity_depth_2020} is also related and shows that the dual of a very similar problem to our own is convex; an earlier paper from the same group~\cite{arce_utility_2016} proposes a greedy bit assignment heuristic. This line of inquiry enables valuable insights on the energy savings obtained by a chosen  bit allocation policy.

Another component of this work is the incorporation of the effects of quantization and communication bandwidth in state estimation. Quantization has a significant effect on the accuracy of state estimation in scenarios limited by bandwidth. In communication-constrained settings\textemdash such as remote monitoring over low-bandwidth links or large-scale sensor networks with shared communication channels\textemdash the number of bits allocated to each measurement directly determines the quantization noise variance and, consequently, the estimation quality. This connects to the classical theory of optimal experimental design~\cite{fedorov1972theory}, where measurement resources are allocated to minimize estimation error. 

Quantized estimation has been addressed in the Kalman filtering~\cite{sshuli2007QKF,emsechu2008DQKF}, sensor network placement~\cite{kekatos2012optimal}, and topology learning~\cite{talkington_quantized_2026} settings. In power systems, the literature on state estimation has also focused on the importance of heterogeneous data sources, communication constraints, and measurement quality for large scale networks \cite{cheng_survey_2023}.

\paragraph{Contributions}
In this paper, we study optimal bit allocation for LMMSE state estimation under a global communication budget. We formulate the problem as a nonconvex $A$-optimal design problem with the following contributions:
\begin{enumerate}

\item We present two algorithms: i.) A first-order Frank--Wolfe algorithm with a closed-form linear minimization oracle (Proposition~\ref{prop:lmo}), and show that the minimum Frank--Wolfe gap over iterates converges to zero at rate $O(1/\sqrt{T})$ (Theorem~\ref{thm:fw-main}), providing a computable convergence certificate. ii.) A second-order interior point algorithm. We derive a closed-form gradient for both FW and the interior point method to reduce each evaluation to a single Cholesky factorization of the information matrix. Further, the interior point method is accelerated with a L-BFGS Hessian approximation and converges in very few iterations. 
\item We introduce a solver-agnostic largest remainder rounding procedure (Algorithm~\ref{alg:largest-remainder-rounding}) that maps continuous relaxations to integer allocations that are guaranteed to be feasible and have bounded solution quality gap.
\item We evaluate the speed and solution quality of the algorithms in the practical setting of experimental design for power network state estimation within different regimes (``sensor rich'' and limited bandwidth). We also validate the problem formulation of heterogeneous bit allocation to a standard uniform bit allocation, achieving up to $53\%$ improvement in the limited bandwidth regime. 
\end{enumerate}

The remainder of the paper is organized as follows. Section~\ref{sec:prob} introduces the measurement model, the LMMSE estimator, and the resulting bit-allocation problem under a global communication budget. Section~\ref{sec:alg} presents the proposed solution methods for the relaxed problem, including the analytic gradient, the Frank--Wolfe algorithm and its convergence guarantee, and an interior point approach. Section~\ref{sec:rounding} describes the rounding procedure used to recover integer bit allocations from relaxed solutions along with guarantees. Section~\ref{sec:numerical} reports numerical experiments on IEEE power system test cases, comparing solver performance and evaluating the benefits of heterogeneous allocation relative to uniform allocation. Finally, Section~\ref{sec:conclusion} concludes the paper and outlines directions for future work.

\section{Problem Formulation}
\label{sec:prob}

\subsection{Measurement Model and Estimation}
\label{sec:meas}

Given a fixed sensing matrix~$\mH \in \R^{m \times d}$ whose rows are~$\{\vh_i^\top\}_{i=1}^m$, where each~$\vh_i \in \R^d$, corresponding to~$m$ linear projections of an unknown state vector~$\vx \in \R^d$, we consider heterogeneously quantized measurements with quantization bin widths~$\{\Delta_i\}_{i=1}^m$. We assume throughout that $\vh_i \neq \vzero$ for all~$i$, i.e., every sensor observes a nontrivial linear combination of the state.

Each quantized measurement~$i$ is generated from a uniformly dithered quantization function~$\set{Q}_i : \R \to \R$ (see~\cite{gray_quantization_survey_1998,thrampoulidis_quantized_lasso_2020}) with bin width $\Delta_i >0$ such that
\begin{equation}
    \label{eq:unif_quant_measmt}
    \set{Q}_i(\ip{\vh_i}{\vx}) = \Delta_i \cdot\p{\floor{\frac{\ip{\vh_i}{\vx} + \tau_i}{\Delta_i}} + \frac{1}{2}},
\end{equation}
where $\tau_i \sim \mathrm{Uniform}(-\frac{\Delta_i}{2},\frac{\Delta_i}{2})$, and $\Expec[\tau_i]{\set{Q}_i(\ip{\vh_i}{\vx})\,|\,\vh_i}=\ip{\vh_i}{\vx}$, i.e., $\set{Q}_i(\ip{\vh_i}{\vx})$ is conditionally unbiased. The independent dither $\tau_i$ is noise purposely applied prior to quantization; see~\cite{gray_quantization_survey_1998, gray_dithered_quantizer_1993}.
The quantization error is well approximated by the additive model
\begin{equation}
\label{eq:additive_model}
\vy := \mH\vx + \vz,
\end{equation}
where $\Expec[]{\vz \vz^\T} = \frac{1}{12}\diag(\Delta_1^2,\ldots,\Delta_m^2)$. Set $d_i := \Delta_i^2/12$ and $\mD := \diag(d_1,\ldots,d_m)$. Suppose $\vx$ is zero-mean with prior covariance $\mC_{\vx} \succ 0$. The LMMSE state estimator is
\begin{equation}
    \label{eq:lmmse-state-est}
    \vxhat = \mC_{\vx} \mH^\T\p{\mH\mC_{\vx}\mH^\T + \mD}^{-1}\!\vy,
\end{equation}
with error covariance
\begin{equation}
\label{eq:error-cov}
\mC_{\vepsilon} = \p{\mH^\T\mD^{-1}\mH + \mC_{\vx}^{-1}}^{-1}\!.
\end{equation}
Quantization error is generally signal-dependent, so an additive noise model can be inaccurate, especially at coarse resolutions \cite{gray_quantization_survey_1998}. Our use of uniformly dithered quantizers in~\eqref{eq:unif_quant_measmt} is intended  to mitigate this dependence and justify the covariance model \cite{gray_dithered_quantizer_1993}. Accordingly, the optimization problem below should be interpreted as the $A$-optimal design problem induced by this dithered additive noise LMMSE model.

\begin{figure}[t]
\centering
\begin{tikzpicture}[
    sensor/.style={circle, draw=nodeblue, fill=nodeblue!20,
        minimum size=13pt, inner sep=0pt, font=\scriptsize,
        text=nodeblue!80!black},
    bit/.style={fill=bitgold, draw=bitgold!50!black, line width=0.15pt}
]
\node[sensor] (n1) at (0.3, 1.6) {1};
\node[sensor] (n2) at (1.6, 2.2) {2};
\node[sensor] (n3) at (3.0, 1.7) {3};
\node[sensor] (n4) at (0.5, 0.2) {4};
\node[sensor] (n5) at (2.0, 0.4) {5};
\node[sensor] (n6) at (3.3, 0.7) {6};
\foreach \i/\j in {1/2, 2/3, 1/4, 2/4, 2/5, 3/5, 3/6, 5/6} {
    \draw[edgegray, thick] (n\i) -- (n\j);
}
\foreach \n/\b in {1/4, 2/7, 3/2, 4/1, 5/6, 6/3} {
    \foreach \k in {1,...,\b} {
        \pgfmathsetmacro\ys{(\k-1)*0.11}
        \fill[bit] ([xshift=3pt, yshift=\ys cm]n\n.east)
            rectangle ++(0.08cm, 0.08cm);
    }
    \pgfmathsetmacro\labely{(\b)*0.1 - 0.02}
    \node[font=\tiny, text=bitgold!80!black, anchor=south west]
        at ([xshift=2pt, yshift=\labely cm]n\n.east) {\b};
}
\draw[edgegray!70, rounded corners=2pt, thin]
    (4.15, 0.35) rectangle (6.55, 2.45);
\node[font=\footnotesize, text=edgegray!70!black, anchor=north west, scale=0.8, yshift=-4pt]
    at (4.25, 2.4) {$b_3 = 2$ bits};
\draw[edgegray!60, dashed, thin, ->]
    ([xshift=1pt]n3.east) -- (4.15, 1.4);
\begin{scope}[shift={(4.35, 0.55)}, xscale=1.15, yscale=1.4]
    \foreach \y in {0.125, 0.375, 0.625, 0.875} {
        \draw[edgegray!30, very thin, dashed] (0, \y) -- (1.8, \y);
    }
    \draw[nodeblue, semithick, opacity=0.5]
        plot[smooth, domain=0:1.8, samples=60]
        (\x, {0.35*sin(\x*200+30) + 0.5});
    \draw[bitgold, semithick]
        (0, 0.625) -- (0.08, 0.625) -- (0.08, 0.875) --
        (0.5, 0.875) -- (0.5, 0.625) -- (0.75, 0.625) --
        (0.75, 0.375) -- (1.0, 0.375) -- (1.0, 0.125) --
        (1.4, 0.125) -- (1.4, 0.375) -- (1.65, 0.375) --
        (1.65, 0.625) -- (1.8, 0.625);
    \node[font=\footnotesize, text=bitgold!80!black, anchor=north, scale=0.8, yshift=-4pt]
        at (0.9, -0.05) {$2^{b_3}\!=2^{2} = \!4$ levels};
\end{scope}
\node[anchor=north, font=\small] at (3.3, -0.3) {
    $\vy = \mH\vx + \vz, \quad
    \textcolor{bitgold}{\textstyle\sum_{i=1}^{m} b_i} \le B$
};
\end{tikzpicture}
\caption{Bit allocation on a sensor network. Each node~$i$ receives
$b_i$ bits (\,$\textcolor{bitgold}{\blacksquare}$\,=\,1\,bit) subject
to budget $\textcolor{bitgold}{\sum_i b_i} \le B$.
Inset: 2-bit quantization maps an analog measurement to
$2^{b_i}\!=\!4$ discrete levels.}
\label{fig:system_diagram}
\end{figure}

\subsection{Bandwidth Allocation Problem}
\label{sec:bandwidth}

We allocate a limited number of quantization bits across $m$ channels to minimize the MSE. For a $b_i$-bit uniform quantizer with dynamic range $R_i$,
\begin{equation}
\label{eq:delta-d-rho-b}
\Delta_i=\frac{R_i}{2^{b_i}},
\quad
d_i=\frac{R_i^2}{12\cdot 4^{b_i}},
\quad
\rho_i:=d_i^{-1}
=
\kappa_i\,4^{b_i},
\end{equation}
where $\kappa_i:=12/R_i^2$. Allocating more bits to channel $i$ increases its precision exponentially. The integer program is
\begin{equation}
\label{eq:bit-alloc-integer}
\min_{\vb\in\Z_+^m}
\quad
\trace\bigl(\mC_{\vepsilon}(\vb)\bigr)
\quad
\st
\quad
\vone^\T \vb \leq B.
\end{equation}

The standard continuous relaxation replaces $\Z_+^m$ by $\R_+^m$. Define $\mathcal{B} := \{\vb\in\R_+^m:\ \vone^\T \vb \leq B\}$ and write
\begin{equation}
\label{eq:M-and-Ceps-b}
\mM(\vb)
:=
\mC_{\vx}^{-1}
+
\mH^\T\diag(\vrho(\vb))\mH,
\quad
\mC_{\vepsilon}(\vb)
:=
\mM(\vb)^{-1}.
\end{equation}
The relaxed problem is
\begin{equation}
\label{eq:bit-alloc-relaxed}
\min_{\vb\in\mathcal{B}}
\quad
F(\vb)
:=
\trace\bigl(\mC_{\vepsilon}(\vb)\bigr).
\end{equation}

The feasible set $\mathcal{B}$ is compact and convex, but $F$ is generally nonconvex in the bit variables due to the exponential reparameterization $\rho_i = \kappa_i 4^{b_i}$. Introducing the auxiliary function $f(\vrho) := \trace((\mC_{\vx}^{-1} + \mH^\T\diag(\vrho)\mH)^{-1})$ in precision variables, we have $F(\vb)=f(\vrho(\vb))$. While $f$ is convex in $\vrho$, the chain rule yields
\begin{equation}
\begin{split}
\label{eq:hessian-chain-rule}
\nabla^2 F(\vb)
=
(\ln 4)^2
&\Bigl[
\diag(\vrho)\nabla^2 f(\vrho)\diag(\vrho) \\
&+
\diag\bigl(\vrho\odot \nabla f(\vrho)\bigr)
\Bigr],
\end{split}
\end{equation}
where the second term is negative semidefinite (since $\nabla f \leq \vzero$ componentwise), so convexity does not transfer from $\vrho$ to $\vb$.

\begin{remark}[Convexity in precision space]
\label{rem:rho-convexity}
Although $F(\vb)$ is nonconvex in the bit variables, the function
$f(\vrho) = \trace\bigl((\mC_{\vx}^{-1} + \mH^\T\diag(\vrho)\mH)^{-1}\bigr)$
is convex in the precision variables~$\vrho$.
However, the budget constraint $\vone^\T \vb \le B$ becomes
$\sum_{i=1}^m \log_4(\rho_i/\kappa_i) \le B$ in $\vrho$-space, which is
\emph{not} convex (it is the sublevel set of a concave function).
Thus, the $\vb$-space formulation trades a nonconvex objective for a convex
(polyhedral) feasible set, which is precisely what enables the Frank--Wolfe
method with a closed-form linear minimization oracle.
This tradeoff is complementary to the dual formulation
of~\cite{arce_convexity_depth_2020}, which achieves a convex objective
at the cost of a more complex constraint geometry.
\end{remark}

\begin{lemma}[Gradient with respect to bits]
\label{lemma:grad-b}
For each measurement $i=1,\dots,m$,
\begin{equation}
\label{eq:grad-b}
\frac{\partial F}{\partial b_i}(\vb)
=
-(\ln 4)\,\rho_i(\vb)\,
\vh_i^\T \mC_{\vepsilon}(\vb)^2 \vh_i.
\end{equation}
Equivalently,
\begin{equation}
\label{eq:grad-b-vector}
\nabla F(\vb)
=
-(\ln 4)\,
\vrho(\vb)\odot
\left[\vh_i^\T \mC_{\vepsilon}(\vb)^2 \vh_i\right]_{i=1}^m.
\end{equation}
In particular, $\partial F/\partial b_i < 0$ whenever $\vh_i \neq \vzero$.
Each gradient evaluation reduces to a single Cholesky factorization of~$\mM(\vb)$: the diagonal entries $\vh_i^\T \mC_{\vepsilon}^2 \vh_i$ are extracted column-by-column from $\mC_{\vepsilon}\mH^\T$ without forming the full $m\times m$ product, at cost $O(d^2 m)$ dominated by the $O(d^3)$ factorization when $m = O(d)$.
\end{lemma}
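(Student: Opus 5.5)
The plan is to differentiate $F(\vb)=\trace(\mM(\vb)^{-1})$ directly via the chain rule through the map $b_i\mapsto\rho_i(\vb)=\kappa_i 4^{b_i}$, using the standard derivative of a matrix inverse. First, $\partial\rho_i/\partial b_i=(\ln 4)\kappa_i4^{b_i}=(\ln 4)\rho_i$, and $\rho_j$ does not depend on $b_i$ for $j\neq i$. Since $\mM(\vb)=\mC_{\vx}^{-1}+\sum_j\rho_j\vh_j\vh_j^\T$, this gives $\partial\mM/\partial b_i=(\ln 4)\rho_i\vh_i\vh_i^\T$. Next, $\mM(\vb)\succeq\mC_{\vx}^{-1}\succ0$ is invertible and smooth in $\vb$, so the identity $\partial(\mM^{-1})=-\mM^{-1}(\partial\mM)\mM^{-1}$ applies. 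Combining it with linearity of the trace and cyclicity yields
\begin{equation*}
\frac{\partial F}{\partial b_i}=-\trace\bigl(\mC_{\vepsilon}\,(\ln4)\rho_i\vh_i\vh_i^\T\,\mC_{\vepsilon}\bigr)=-(\ln4)\rho_i\,\vh_i^\T\mC_{\vepsilon}^2\vh_i,
\end{equation*}
which is \eqref{eq:grad-b}; stacking over $i$ gives \eqref{eq:grad-b-vector}. As a consistency check, this agrees with $\nabla f(\vrho)_i=-\vh_i^\T\mC_{\vepsilon}^2\vh_i$ composed with $\partial\rho_i/\partial b_i$, as already used in \eqref{eq:hessian-chain-rule}.

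For strict negativity, note $\rho_i>0$ because $\kappa_i>0$. Also $\mC_{\vepsilon}\succ0$, so $\mC_{\vepsilon}^2\succ0$, and hence $\vh_i^\T\mC_{\vepsilon}^2\vh_i=\|\mC_{\vepsilon}\vh_i\|_2^2>0$ whenever $\vh_i\neq\vzero$.

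The computational claim comes from rewriting the quadratic form as $\vh_i^\T\mC_{\vepsilon}^2\vh_i=\|\mC_{\vepsilon}\vh_i\|_2^2$ and proceeding as follows.
\begin{enumerate}
\item Form $\mM(\vb)$. The product $\mH^\T\diag(\vrho)\mH$ costs $O(d^2m)$.
\item Compute the Cholesky factor $\mM=\mtx{L}\mtx{L}^\T$ at cost $O(d^3)$.
\item Obtain $\mtx{W}:=\mC_{\vepsilon}\mH^\T=\mM^{-1}\mH^\T$ by forward and back substitution on the $m$ right-hand sides, at cost $O(d^2m)$.
\item Read off the $i$th diagonal entry as the squared norm of the $i$th column of $\mtx{W}$, at cost $O(dm)$ in total.
\end{enumerate}
This avoids forming the $m\times m$ matrix $\mH\mC_{\vepsilon}^2\mH^\T$. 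The total is $O(d^3+d^2m)$, which is $O(d^3)$ when $m=O(d)$.

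There is no real obstacle here. The only points needing care are justifying differentiability of the inverse, which follows from the uniform positive definiteness of $\mM(\vb)$ on $\R_+^m$, and correctly handling the trace cyclicity to obtain $\mC_{\vepsilon}^2$ rather than $\mC_{\vepsilon}$.
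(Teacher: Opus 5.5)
Your proof is correct and takes essentially the paper's route. The paper first computes $\partial f/\partial\rho_i=-\vh_i^\T\mC_{\vepsilon}^2\vh_i$ via the inverse-derivative identity and trace cyclicity, then multiplies by $(\ln 4)\rho_i$; you fold that factor into $\partial\mM/\partial b_i$ instead, which is the same chain rule, and your cost accounting (Cholesky, $m$ triangular solves for $\mC_{\vepsilon}\mH^\T$, column norms) also justifies the complexity claim that the paper's proof leaves unargued.
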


\begin{proof}
Define
\[
\mA(\vrho)
:=
\mC_{\vx}^{-1}
+
\mH^\T\diag(\vrho)\mH,
\quad
\mC_{\vepsilon}(\vrho)=\mA(\vrho)^{-1}.
\]
Since
$\partial \mA/\partial \rho_i
=
\mH^\T \ve_i\ve_i^\T \mH
=
\vh_i\vh_i^\T$,
the derivative identity for matrix inverses gives
\[
\frac{\partial \mC_{\vepsilon}}{\partial \rho_i}
=
-
\mC_{\vepsilon}
\frac{\partial \mA}{\partial \rho_i}
\mC_{\vepsilon}
=
-\mC_{\vepsilon}\vh_i\vh_i^\T\mC_{\vepsilon}.
\]
Taking traces,
\begin{align*}
\frac{\partial f}{\partial \rho_i}
&=
\trace\p{\frac{\partial \mC_{\vepsilon}}{\partial \rho_i}}
=
-\trace\bigl(\mC_{\vepsilon}\vh_i\vh_i^\T \mC_{\vepsilon}\bigr)
=
-\vh_i^\T \mC_{\vepsilon}^2 \vh_i.
\end{align*}
Since $\mC_{\vepsilon}\succ 0$, the quadratic form $\vh_i^\T \mC_{\vepsilon}^2 \vh_i$ is nonnegative. Now $F(\vb)=f(\vrho(\vb))$ with $\rho_i(\vb)=\kappa_i4^{b_i}$, so by chain rule
\[
\frac{\partial F}{\partial b_i}
=
\frac{\partial f}{\partial \rho_i}\cdot
\frac{\partial \rho_i}{\partial b_i} = \left(-\vh_i^\T \mC_{\vepsilon}(\vb)^2 \vh_i \right) \cdot (\ln 4)\rho_i(\vb),
\]
which yields~\eqref{eq:grad-b}. The vector form~\eqref{eq:grad-b-vector} follows by stacking. If $\vh_i\neq \vzero$, then $\vh_i^\T \mC_{\vepsilon}(\vb)^2\vh_i>0$ because $\mC_{\vepsilon}(\vb)\succ 0$, so the derivative is strictly negative.
\end{proof}

\begin{proposition}[Budget saturation]
\label{prop:budget-saturation}
If $\vh_i\neq \vzero$ for all $i$ and $\vb^\star$ minimizes~\eqref{eq:bit-alloc-relaxed}, then $\vone^\T \vb^\star = B$.
\end{proposition}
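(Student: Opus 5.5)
We argue by contradiction, using the strict monotonicity given by Lemma~\ref{lemma:grad-b}. Suppose $\vb^\star$ minimizes~\eqref{eq:bit-alloc-relaxed} but $\vone^\T\vb^\star < B$. Set the slack $s := B - \vone^\T\vb^\star > 0$. We will build a feasible point whose objective value is strictly smaller, which contradicts optimality.

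For the perturbation, pick any index, say $i=1$, and define $\vb(t) := \vb^\star + t\,\ve_1$ for $t\in[0,s]$. Each such point is feasible. Its entries are nonnegative, since we only increase a coordinate. Its budget satisfies $\vone^\T\vb(t) = \vone^\T\vb^\star + t \le B$. So the whole segment lies in $\mathcal{B}$.

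Next we check that $F$ is continuously differentiable along this segment. The map $\vb\mapsto\vrho(\vb)$ is smooth. The matrix $\mM(\vb)$ satisfies $\mM(\vb)\succeq\mC_{\vx}^{-1}\succ0$ because every $\rho_i>0$, so it is invertible everywhere. Matrix inversion is smooth on the positive definite cone, hence $F$ is $C^1$ on a neighborhood of $\mathcal{B}$, and Lemma~\ref{lemma:grad-b} applies at every point of the segment.

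Now define $\phi(t) := F(\vb(t))$. By~\eqref{eq:grad-b}, its derivative is
\[
\phi'(t) = \frac{\partial F}{\partial b_1}\bigl(\vb(t)\bigr) = -(\ln 4)\,\rho_1\bigl(\vb(t)\bigr)\,\vh_1^\T\mC_{\vepsilon}\bigl(\vb(t)\bigr)^2\vh_1 .
\]
This is strictly negative, because $\vh_1\neq\vzero$, $\mC_{\vepsilon}\succ 0$ and $\rho_1>0$. By the mean value theorem, $\phi(s) < \phi(0)$, that is, $F(\vb^\star + s\ve_1) < F(\vb^\star)$. This contradicts the minimality of $\vb^\star$, so $\vone^\T\vb^\star = B$.

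The argument has no real obstacle. The only point that needs care is justifying differentiability, that is, invertibility of $\mM$, on the segment. This is immediate from $\mC_{\vx}^{-1}\succ0$ and $\vrho>0$.

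A global minimizer exists because $F$ is continuous on the compact set $\mathcal{B}$, but the proposition does not need this. The same argument shows more: every local minimizer, and indeed every first-order stationary point over $\mathcal{B}$, saturates the budget. The reason is that $\ve_1$ is a feasible descent direction at any point with slack.
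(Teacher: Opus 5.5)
Your proof is correct and matches the paper's: both perturb a slack point along a coordinate direction $\ve_i$ and use the strict negativity of $\partial F/\partial b_i$ from Lemma~\ref{lemma:grad-b}. Your mean value theorem argument, with the derivative negative along the whole segment, is a slightly more careful version of the paper's local step that increasing $b_i$ decreases the objective.
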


\begin{proof}
Consider any feasible $\vb$ such that $\vone^\T \vb < B$. Then $\vb + t\ve_i\in\mathcal{B}$ for some index $i$ and all sufficiently small $t>0$. By Lemma~\ref{lemma:grad-b},
$\partial F/\partial b_i(\vb)<0$,
so increasing $b_i$ decreases the objective, which means that $\vb$ is not a minimum. Conversely, if $\vb^\star$ minimizes~\eqref{eq:bit-alloc-relaxed}, we must have that $\vone^\T \vb^\star = B$.
\end{proof}

\begin{proposition}[Lipschitz continuity of the gradient]
\label{prop:lipschitz-grad}
The function $F$ is $C^\infty$ on $\R^m$, and $\nabla F$ is $L$-Lipschitz on $\mathcal{B}$ with
\begin{equation}
\label{eq:lipschitz-constant}
L
=
(\ln 4)^2\,\|\mC_{\vx}\|_2\,(2m+1).
\end{equation}
\end{proposition}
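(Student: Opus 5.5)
The plan is to bound the spectral norm of the Hessian $\nabla^2 F(\vb)$ uniformly over $\vb \in \mathcal{B}$. Since $\mathcal{B}$ is convex, the mean value theorem applied along segments then gives $\|\nabla F(\vb)-\nabla F(\vb')\|_2 \le \sup_{\mathcal{B}}\|\nabla^2F\|_2\,\|\vb-\vb'\|_2$.

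Smoothness comes first and is routine. The map $\vb\mapsto\vrho(\vb)$ with $\rho_i=\kappa_i4^{b_i}$ is $C^\infty$, and $\vrho\mapsto\mM$ is affine. For every $\vb\in\R^m$ we have $\mM(\vb)\succeq\mC_{\vx}^{-1}\succ 0$, because $\rho_i>0$. Matrix inversion is $C^\infty$ on the positive definite cone, and the trace is linear, so $F$ is $C^\infty$ on $\R^m$.

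Next I compute $\nabla^2 f(\vrho)$ by differentiating the expression $\partial f/\partial\rho_i=-\vh_i^\T\mC_{\vepsilon}^2\vh_i$ from Lemma~\ref{lemma:grad-b}. I use $\partial\mC_{\vepsilon}/\partial\rho_j=-\mC_{\vepsilon}\vh_j\vh_j^\T\mC_{\vepsilon}$ together with the product rule on $\mC_{\vepsilon}^2$. This gives
\[
[\nabla^2 f(\vrho)]_{ij}=2\,(\vh_i^\T\mC_{\vepsilon}\vh_j)(\vh_i^\T\mC_{\vepsilon}^2\vh_j).
\]
Substituting into \eqref{eq:hessian-chain-rule}, I rescale with $\mtx{G}:=\diag(\vrho)^{1/2}\mH$ and define $\mtx{P}:=\mtx{G}\mC_{\vepsilon}\mtx{G}^\T$ and $\mtx{Q}:=\mtx{G}\mC_{\vepsilon}^2\mtx{G}^\T$. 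Then
\[
\nabla^2F(\vb)=(\ln 4)^2\bigl[\,2\,\mtx{P}\circ\mtx{Q}-\diag(\mtx{Q})\,\bigr],
\]
where $\circ$ is the Hadamard product and $\diag(\mtx{Q})$ is the diagonal part of $\mtx{Q}$. At this point all dependence on the possibly huge precisions $\rho_i$ has been absorbed into $\mtx{G}$.

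The key step, and the only real obstacle, is to bound $\mtx{P}$ and $\mtx{Q}$ independently of $\vrho$.

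For $\mtx{P}$, write $\mtx{P}=\mtx{G}(\mC_{\vx}^{-1}+\mtx{G}^\T\mtx{G})^{-1}\mtx{G}^\T$. Since $\mC_{\vx}^{-1}\succ 0$, we have $(\mC_{\vx}^{-1}+\mtx{G}^\T\mtx{G})^{-1}\preceq(\mtx{G}^\T\mtx{G})^{\dagger}$ on the range of $\mtx{G}^\T$. Hence $0\preceq\mtx{P}\preceq\mI$. The cleanest route is the SVD of $\mtx{G}$ or a push-through identity, which shows the eigenvalues of $\mtx{P}$ have the form $\sigma^2/(\sigma^2+\lambda)$ for some $\lambda>0$.

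For $\mtx{Q}$, write $\mtx{Q}=\mtx{G}\mC_{\vepsilon}^{1/2}\,\mC_{\vepsilon}\,\mC_{\vepsilon}^{1/2}\mtx{G}^\T$. This gives $\mtx{Q}\preceq\|\mC_{\vepsilon}\|_2\,\mtx{P}\preceq\|\mC_{\vx}\|_2\,\mI$, using $\mC_{\vepsilon}=\mM^{-1}\preceq\mC_{\vx}$.

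Consequently every entry satisfies $|P_{ij}|\le 1$ and $|Q_{ij}|\le\|\mC_{\vx}\|_2$, because for a PSD matrix $|X_{ij}|\le\sqrt{X_{ii}X_{jj}}\le\|X\|_2$.

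To finish, I bound the spectral norm by the maximum absolute row sum, which is valid for a symmetric matrix. Each row of $2\,\mtx{P}\circ\mtx{Q}$ has absolute sum at most $2m\|\mC_{\vx}\|_2$, and the diagonal term adds at most $\|\mC_{\vx}\|_2$. This yields $\|\nabla^2F(\vb)\|_2\le(\ln4)^2\|\mC_{\vx}\|_2(2m+1)$, which is exactly $L$.

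A sharper bound is available through Schur's Hadamard inequality: $\|\mtx{P}\circ\mtx{Q}\|_2\le\max_i P_{ii}\,\|\mtx{Q}\|_2$, which would give the constant $3$ in place of $2m+1$. The crude row-sum bound, however, already matches the stated constant. Notably, the bound holds for all $\vb\in\R^m$, not only on $\mathcal{B}$.
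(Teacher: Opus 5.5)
Your proof is correct and is essentially the paper's argument in matrix form. Your bound $\mtx{P}\preceq\mI$, obtained by push-through rather than the paper's coordinatewise Sherman--Morrison, yields $\rho_i\vh_i^\T\mC_{\vepsilon}\vh_i\le 1$; your $\mtx{Q}\preceq\|\mC_{\vepsilon}\|_2\mtx{P}$ is the paper's $\mC_{\vepsilon}^2\preceq\|\mC_{\vx}\|_2\mC_{\vepsilon}$; your PSD entry bound is its Cauchy--Schwarz step; and the final max-row-sum bound is identical. Your closing Schur-inequality remark is also valid and strictly strengthens the result to the dimension-free constant $3(\ln 4)^2\|\mC_{\vx}\|_2$, which would considerably tighten the rounding bound of Theorem~\ref{thm:largest-remainder-gap}.
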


\begin{proof}
Recall that
\(
F(\vb)=f(\vrho(\vb))
\) and 
\(
\rho_i(\vb)=\kappa_i4^{b_i},
\)
where
\[
f(\vrho)
=
\trace\Bigl(
\bigl(
\mC_{\vx}^{-1}
+
\mH^\T\diag(\vrho)\mH
\bigr)^{-1}
\Bigr).
\]
Since each coordinate map $\vb\mapsto \rho_i(\vb)=\kappa_i4^{b_i}$ is smooth and strictly positive on $\R^m$, and matrix inversion is smooth on the positive definite cone, $F$ is $C^\infty$ on $\R^m$. Since $\mC_{\vepsilon}(\vb)\succ 0$ for all $\vb$, and
\(
\mC_{\vepsilon}(\vb)\preceq \mC_{\vx}
\)
implies 
\(
\|\mC_{\vepsilon}(\vb)\|_2\leq \|\mC_{\vx}\|_2.
\)
From Lemma~\ref{lemma:grad-b},
\(
\frac{\partial f}{\partial \rho_i}(\vrho)
=
-\vh_i^\T \mC_{\vepsilon}^2 \vh_i.
\)
Differentiating once more with respect to $\rho_j$ gives
\begin{equation}
\label{eq:hess-rho-proof}
\frac{\partial^2 f}{\partial \rho_j\partial \rho_i}(\vrho)
=
2\,\trace\bigl(
\mC_{\vepsilon}\vh_i\vh_i^\T \mC_{\vepsilon}\vh_j\vh_j^\T \mC_{\vepsilon}
\bigr).
\end{equation}
Fix $i$, and write
\(
\mA_i
:=
\mC_{\vx}^{-1}
+
\sum_{k\neq i}\rho_k \vh_k\vh_k^\T
\succ 0.
\)
By the Sherman--Morrison--Woodbury identity,
\[
\mC_{\vepsilon}
=
\mA_i^{-1}
-
\frac{\rho_i\,\mA_i^{-1}\vh_i\vh_i^\T \mA_i^{-1}}
{1+\rho_i \vh_i^\T \mA_i^{-1}\vh_i}.
\]
Hence, with
\(
\vh_i^\T \mA_i^{-1}\vh_i\geq 0,
\)
we obtain
\[
\vh_i^\T \mC_{\vepsilon} \vh_i
=
\frac{\vh_i^\T \mA_i^{-1}\vh_i}{1+\rho_i \vh_i^\T \mA_i^{-1}\vh_i}
\leq
\frac{1}{\rho_i}.
\]
Therefore
\begin{equation}
\label{eq:rhohSh-bound}
\rho_i\,\vh_i^\T \mC_{\vepsilon} \vh_i \leq 1.
\end{equation}
Since $\mC_{\vepsilon}\succeq 0$ and $\|\mC_{\vepsilon}\|_2\leq \|\mC_{\vx}\|_2$, we have
\[
\mC_{\vepsilon}^2 \preceq \|\mC_{\vepsilon}\|_2\,\mC_{\vepsilon} \preceq \|\mC_{\vx}\|_2 \mC_{\vepsilon}.
\]
Thus
\(
\vh_i^\T \mC_{\vepsilon}^2 \vh_i
\leq
\|\mC_{\vx}\|_2\,\vh_i^\T \mC_{\vepsilon} \vh_i.
\)
Using~\eqref{eq:rhohSh-bound},
\[
\rho_i\left|\frac{\partial f}{\partial \rho_i}\right|
=
\rho_i\,\vh_i^\T \mC_{\vepsilon}^2 \vh_i
\leq
\|\mC_{\vx}\|_2\,\rho_i\,\vh_i^\T \mC_{\vepsilon} \vh_i
\leq
\|\mC_{\vx}\|_2.
\]
Therefore
\begin{equation}
\label{eq:rho-grad-bound}
\rho_i\left|\frac{\partial f}{\partial \rho_i}\right|
\leq
\|\mC_{\vx}\|_2.
\end{equation}
From~\eqref{eq:hess-rho-proof},
\[
\frac{\partial^2 f}{\partial \rho_j\partial \rho_i}
=
2\,(\vh_i^\T \mC_{\vepsilon} \vh_j)(\vh_i^\T \mC_{\vepsilon}^2 \vh_j).
\]
Since $\mC_{\vepsilon}\succeq 0$, the Cauchy--Schwarz inequality in the $\mC_{\vepsilon}$-inner product yields
\[
|\vh_i^\T \mC_{\vepsilon} \vh_j|
\leq
\bigl(\vh_i^\T \mC_{\vepsilon} \vh_i\bigr)^{1/2}
\bigl(\vh_j^\T \mC_{\vepsilon} \vh_j\bigr)^{1/2}
\leq
\frac{1}{\sqrt{\rho_i\rho_j}},
\]
where the last step uses~\eqref{eq:rhohSh-bound}. Since $\mC_{\vepsilon}^2\preceq \|\mC_{\vx}\|_2 \mC_{\vepsilon}$,
\begin{align*}
|\vh_i^\T \mC_{\vepsilon}^2 \vh_j|
&\leq
\bigl(\vh_i^\T \mC_{\vepsilon}^2 \vh_i\bigr)^{1/2}
\bigl(\vh_j^\T \mC_{\vepsilon}^2 \vh_j\bigr)^{1/2} \\
&\leq
\|\mC_{\vx}\|_2
\bigl(\vh_i^\T \mC_{\vepsilon} \vh_i\bigr)^{1/2}
\bigl(\vh_j^\T \mC_{\vepsilon} \vh_j\bigr)^{1/2} \\
&\leq
\frac{\|\mC_{\vx}\|_2}{\sqrt{\rho_i\rho_j}}.
\end{align*}
Combining the two bounds,
\[
\left|
\frac{\partial^2 f}{\partial \rho_j\partial \rho_i}
\right|
\leq
\frac{2\|\mC_{\vx}\|_2}{\rho_i\rho_j},
\]
and therefore
\begin{equation}
\label{eq:rho-hess-bound}
\rho_i\rho_j
\left|
\frac{\partial^2 f}{\partial \rho_j\partial \rho_i}
\right|
\leq
2\|\mC_{\vx}\|_2.
\end{equation}
From the chain rule with $\delta_{ij}$ denoting the Kronecker delta,
\begin{equation}
\label{eq:hess-b-proof}
\frac{\partial^2 F}{\partial b_j\partial b_i}
=
(\ln 4)^2
\left[
\delta_{ij}\,\rho_i\frac{\partial f}{\partial \rho_i}
+
\rho_i\rho_j\frac{\partial^2 f}{\partial \rho_j\partial \rho_i}
\right].
\end{equation}
Using~\eqref{eq:rho-grad-bound} and~\eqref{eq:rho-hess-bound},
\[
\left|
\frac{\partial^2 F}{\partial b_j\partial b_i}
\right|
\leq
(\ln 4)^2\bigl[\delta_{ij}\|\mC_{\vx}\|_2 + 2\|\mC_{\vx}\|_2\bigr].
\]
Hence, for each fixed $i$,
\begin{align*}
\sum_{j=1}^m
\left|
\frac{\partial^2 F}{\partial b_j\partial b_i}
\right|
&\leq
(\ln 4)^2\sum_{j=1}^m \bigl[\delta_{ij}\|\mC_{\vx}\|_2 + 2\|\mC_{\vx}\|_2\bigr] \\
&=
(\ln 4)^2\,\|\mC_{\vx}\|_2\,(2m+1) =: L.
\end{align*}
Thus
\(
\|\nabla^2 F(\vb)\|_2
\leq
\|\nabla^2 F(\vb)\|_\infty
\leq
(\ln 4)^2\,\|\mC_{\vx}\|_2\,(2m+1)
\)
for all $\vb\in\mathcal{B}$. Since the Hessian is uniformly bounded on the convex set $\mathcal{B}$, the mean value theorem implies that $\nabla F$ is $L$-Lipschitz continuous on $\mathcal{B}$.
\end{proof}

\section{Bit Allocation Procedure}
\label{sec:alg}

We solve the relaxed problem~\eqref{eq:bit-alloc-relaxed} using the Frank--Wolfe (FW) method~\cite{frank1956fw}; see~\cite{jaggi_frankwolfe_2013,pokutta_frank-wolfe_2024} for introductions. Unlike the convex setting of~\cite{ahipasaoglu_first-order_2015}, our objective is nonconvex in~$\vb$, so we apply the nonconvex FW analysis of~\cite{lacoste-julien_fw_nonconvex_2016}. FW is well suited because the feasible set is compact and convex, the objective is smooth, and the linear minimization oracle (LMO) admits a closed-form solution.

\subsection{Gradient and Linear Minimization Oracle}
\label{sec:lmo}

Given a feasible iterate $\vb\in\mathcal{B}$, the FW linear minimization oracle solves
\begin{equation}
\label{eq:lmo-def}
\vs(\vb)
\in
\argmin_{\vs\in\mathcal{B}}
\ip{\nabla F(\vb)}{\vs}.
\end{equation}
The FW direction is $\vd(\vb):=\vs(\vb)-\vb$, and the FW gap is
\begin{equation}
\label{eq:fw-gap-def}
g_{\mathrm{FW}}(\vb)
:=
\max_{\vs\in\mathcal{B}}
\ip{\vb-\vs}{\nabla F(\vb)}
=
-\ip{\vd(\vb)}{\nabla F(\vb)}.
\end{equation}
As usual, $g_{\mathrm{FW}}(\vb)\ge 0$, with equality if and only if $\vb$ is a first-order stationary point.

\begin{proposition}[Closed-form LMO]
\label{prop:lmo}
Let $\vg:=\nabla F(\vb)\in\R^m$. Then an optimal solution of~\eqref{eq:lmo-def} is
\begin{equation}
\label{eq:lmo-solution}
\vs^\star
=
\begin{cases}
B\,\ve_{i^\star},
& \text{if }\min_i g_i < 0
\text{ and } i^\star\in\argmin_i g_i,\\
\vzero,
& \text{if } g_i\ge 0 \text{ for all } i.
\end{cases}
\end{equation}
Under the standing assumption $\vh_i\neq \vzero$ for all $i$ (so $\vg<\vzero$ by Lemma~\ref{lemma:grad-b}), this simplifies to
\begin{equation}
\label{eq:lmo-solution-negative-grad}
\vs^\star = B\,\ve_{i^\star},
\quad
i^\star\in\argmin_{1\le i\le m} [\nabla F(\vb)]_i.
\end{equation}
\end{proposition}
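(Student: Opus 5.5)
The plan is to treat \eqref{eq:lmo-def} as a linear program over the polytope $\mathcal{B}=\{\vs\in\R_+^m:\ \vone^\T\vs\le B\}$ and use the fact that a linear function attains its minimum over a compact polytope at a vertex. First I will identify the vertices of $\mathcal{B}$. For $B>0$ it is a scaled simplex with the origin included, so its extreme points are exactly $\vzero$ and $B\ve_1,\dots,B\ve_m$. Then $\mathcal{B}=\mathrm{conv}\{\vzero,B\ve_1,\dots,B\ve_m\}$, because any $\vs\in\mathcal{B}$ can be written as $\sum_i (s_i/B)(B\ve_i)+(1-\vone^\T\vs/B)\vzero$, and these weights are nonnegative and sum to one.

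Next I will bound the objective directly, which avoids relying on the vertex theorem. For any $\vs\in\mathcal{B}$,
\[
\ip{\vg}{\vs}=\sum_i g_i s_i \ge \min\{\min_i g_i,\,0\}\cdot\vone^\T\vs \ge B\min\{\min_i g_i,\,0\},
\]
since $s_i\ge 0$ and $0\le\vone^\T\vs\le B$. I then split into cases. If $\min_i g_i<0$, the bound equals $Bg_{i^\star}$, and $\vs=B\ve_{i^\star}$ attains it. If $g_i\ge 0$ for all $i$, the bound is $0$, and $\vs=\vzero$ attains it. Both candidates lie in $\mathcal{B}$, so each is optimal, which establishes \eqref{eq:lmo-solution}.

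Finally, for the simplified form \eqref{eq:lmo-solution-negative-grad}, I invoke Lemma~\ref{lemma:grad-b}. Under the standing assumption $\vh_i\neq\vzero$, it gives $g_i=\partial F/\partial b_i<0$ for every $i$, so the first case always applies.

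I do not expect a real obstacle here. The only delicate point is the sign bookkeeping in the middle inequality: in the case $\min_i g_i<0$, the estimate $\vone^\T\vs\le B$ must multiply a nonpositive constant, which reverses the inequality in the correct direction. I will also remark that $B\ge 0$ is needed for $\mathcal{B}$ to be nonempty. When $B=0$ the set is $\{\vzero\}$ and both formulas return $\vzero$, so the statement remains consistent.
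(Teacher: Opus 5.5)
Your proposal is correct and follows essentially the same route as the paper: the paper identifies the vertices $\vzero, B\ve_1,\dots,B\ve_m$ of $\mathcal{B}$, uses that a linear objective is minimized at a vertex to reduce the LMO to comparing $0$ with $B\min_i g_i$, and then applies Lemma~\ref{lemma:grad-b} for the simplified form. Your explicit bound $\ip{\vg}{\vs}\ge B\min\{\min_i g_i,0\}$ for all $\vs\in\mathcal{B}$ only replaces the appeal to the vertex theorem with a self-contained optimality certificate, and both your sign bookkeeping and your $B=0$ remark are fine.
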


\begin{proof}
The feasible set $\mathcal{B}$ is a polytope with extreme points $\vzero, B\ve_1,\dots,B\ve_m$. Since $\vs\mapsto \ip{\vg}{\vs}$ is linear, an optimum is attained at a vertex. Evaluating gives $\ip{\vg}{\vzero}=0$ and $\ip{\vg}{B\ve_i}=Bg_i$, so the minimum is $\min\{0, B\min_i g_i\}$, yielding~\eqref{eq:lmo-solution}. Simplification~\eqref{eq:lmo-solution-negative-grad} follows from Lemma~\ref{lemma:grad-b}.
\end{proof}

\begin{corollary}[Explicit Frank--Wolfe gap]
\label{cor:fw-gap-explicit}
Under the assumption $\vh_i\neq \vzero$ for all $i$,
\[
g_{\mathrm{FW}}(\vb)
=
\ip{\vb}{\nabla F(\vb)}
-
B\min_{1\le i\le m} [\nabla F(\vb)]_i.
\]
\end{corollary}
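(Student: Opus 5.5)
The plan is to substitute the closed-form LMO of Proposition~\ref{prop:lmo} directly into the definition~\eqref{eq:fw-gap-def} of the Frank--Wolfe gap. By definition,
\[
g_{\mathrm{FW}}(\vb) = \max_{\vs\in\mathcal{B}} \ip{\vb-\vs}{\nabla F(\vb)} = \ip{\vb}{\nabla F(\vb)} - \min_{\vs\in\mathcal{B}} \ip{\vs}{\nabla F(\vb)}.
\]
This holds because the first term does not depend on $\vs$, and maximizing the negative of a linear function is the same as subtracting its minimum. So the whole task reduces to computing the optimal value of the LMO~\eqref{eq:lmo-def}.

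Next, I will invoke Lemma~\ref{lemma:grad-b}. Under the standing assumption $\vh_i\neq\vzero$ for all $i$, it gives $[\nabla F(\vb)]_i<0$ for every $i$, so $\min_i [\nabla F(\vb)]_i<0$. We are therefore in the first case of~\eqref{eq:lmo-solution}, and Proposition~\ref{prop:lmo} gives the minimizer $\vs^\star=B\ve_{i^\star}$ with $i^\star\in\argmin_i[\nabla F(\vb)]_i$. The optimal value is $\ip{B\ve_{i^\star}}{\nabla F(\vb)}=B\min_i[\nabla F(\vb)]_i$. Equivalently, it is $\min\{0,B\min_i g_i\}$ from the vertex enumeration in that proof, and the minimum equals the second entry because that entry is negative.

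Substituting this value into the display above gives the claimed formula. One could also double-check it via $g_{\mathrm{FW}}=-\ip{\vd(\vb)}{\nabla F(\vb)}$ with $\vd=\vs^\star-\vb$, which yields the same expression.

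There is no real obstacle here. The only point requiring care is justifying that the vertex $\vzero$ is not the LMO minimizer, and the strict negativity of the gradient from Lemma~\ref{lemma:grad-b} settles this. Without that assumption, the formula would instead read $\ip{\vb}{\nabla F(\vb)}-\min\{0,B\min_i[\nabla F(\vb)]_i\}$.
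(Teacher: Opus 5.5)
Your proposal is correct and follows essentially the same route as the paper: rewrite the gap as $\ip{\vb}{\nabla F(\vb)}$ minus the LMO value, then use the strict negativity of the gradient from Lemma~\ref{lemma:grad-b} so that Proposition~\ref{prop:lmo} selects $B\ve_{i^\star}$, and substitute. Your closing remark that without the assumption the formula becomes $\ip{\vb}{\nabla F(\vb)}-\min\{0,B\min_i[\nabla F(\vb)]_i\}$ is also correct, though not needed for the statement.
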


\begin{proof}
By definition of FW gap in~\eqref{eq:fw-gap-def}, $g_{\mathrm{FW}}(\vb) = \ip{\vb-\vs(\vb)}{\nabla F(\vb)}$ where $\vs(\vb)\in\argmin_{\vs\in\mathcal{B}}\ip{\nabla F(\vb)}{\vs}$. Since $\nabla F(\vb)<\vzero$ componentwise by Lemma~\ref{lemma:grad-b}, Proposition~\ref{prop:lmo} gives $\vs(\vb)=B\ve_{i^\star}$ with $i^\star\in\argmin_i[\nabla F(\vb)]_i$. Hence
$g_{\mathrm{FW}}(\vb) = \ip{\vb}{\nabla F(\vb)} - B\min_i [\nabla F(\vb)]_i$.
\end{proof}

\subsection{Convergence Guarantee}
\label{sec:convergence}

Starting from $\vb^{(0)}\in\mathcal{B}$, the FW method generates
\[
\vs^{(t)}
\in
\argmin_{\vs\in\mathcal{B}}
\ip{\nabla F(\vb^{(t)})}{\vs},
\quad
\vd^{(t)}:=\vs^{(t)}-\vb^{(t)},
\]
and updates $\vb^{(t+1)} = (1-\gamma_t)\vb^{(t)}+\gamma_t \vs^{(t)}$ with step size
\begin{equation}
\label{eq:fw-update}
\gamma_t = \min\left\{
\frac{g_{\mathrm{FW}}(\vb^{(t)})}{2LB^2},\,1
\right\}.
\end{equation}
Since $\mathcal{B}$ is convex, every iterate remains feasible. The step size~\eqref{eq:fw-update} uses $\mathrm{diam}(\mathcal{B})=\sqrt{2}B$ by property of simplices.

\begin{figure}[t]
\centering
\begin{tikzpicture}[>=Stealth,
    iterate/.style={circle, fill=fwblue, inner sep=1.8pt},
    lmovert/.style={circle, fill=vertexred, inner sep=2.5pt}]
\coordinate (v1) at (0, 0);
\coordinate (v2) at (3.4, 0);
\coordinate (v3) at (1.7, 2.94);
\draw[nodeblue, thick] (v1) -- (v2) -- (v3) -- cycle;
\node[below left, font=\small] at (v1) {$B\ve_1$};
\node[below right, font=\small] at (v2) {$B\ve_2$};
\node[above, font=\small] at (v3) {$B\ve_3$};
\begin{scope}
    \clip (v1) -- (v2) -- (v3) -- cycle;
    \draw[contourgreen, thin, opacity=0.45]
        plot[smooth cycle, tension=3] coordinates {
        (2.1, 0.7) (2.4, 0.9) (2.2, 1.15) (1.85, 1.0)};
    \draw[contourgreen, thin, opacity=0.45]
        plot[smooth cycle, tension=1.2] coordinates {
        (2.55, 0.4) (2.9, 0.75) (2.6, 1.3) (1.9, 1.45)
        (1.35, 1.1) (1.4, 0.55)};
    \draw[contourgreen, thin, opacity=0.45]
        plot[smooth cycle, tension=1.5] coordinates {
        (2.85, 0.2) (3.2, 0.65) (2.9, 1.5) (2.15, 1.85)
        (1.1, 1.6) (0.55, 0.95) (0.65, 0.4) (1.5, 0.12)};
\end{scope}
\node[font=\scriptsize, text=contourgreen!70!black] at (2.4, 0.2) {$F(\vb)$};
\node[iterate] (bt) at (1.4, 0.55) {};
\node[font=\small, text=fwblue, anchor=south] at ([xshift=13pt, yshift=-9pt]bt.west) {$\vb^{(t)}$};
\draw[->, thick, bitgold] (bt) -- ++(0.4, 0.5)
    node[below right=-3pt, font=\small, text=bitgold!80!black] {$-\nabla F$};
\node[lmovert] at (v3) {};
\node[above, font=\small, text=vertexred, yshift=-8pt, xshift=26pt]
    at (v3) {$\vs^{(t)}\!=\!B\ve_{i^\star}$};
\draw[fwblue, dashed, thin] (bt) -- (v3);
\node[iterate] (bt1) at ($(bt)!0.3!(v3)$) {};
\node[font=\small, text=fwblue, above=0pt, xshift=15pt, yshift=-3pt] at (bt1) {$\vb^{(t+1)}$};
\draw[decorate, decoration={brace, amplitude=2.5pt, raise=4pt},
    thin, fwblue!50]
    (bt.center) -- (bt1.center)
    node[midway, below=-7pt, xshift=-12pt, font=\scriptsize, text=fwblue!80!black] {$\gamma_t$};
\node[anchor=north, font=\small] at (1.7, -0.4) {
    $g_{\mathrm{FW}}(\vb^{(t)})
    = \ip{\vb^{(t)} - \vs^{(t)}}{\nabla F(\vb^{(t)})}$
};
\end{tikzpicture}
\caption{Frank--Wolfe iteration on the budget simplex $\mathcal{B}$ for $m=3$.
The LMO selects vertex $\vs^{(t)}$; the next iterate lies on
$[\vb^{(t)}, \vs^{(t)}]$ with step $\gamma_t$.}
\label{fig:fw-simplex}
\end{figure}
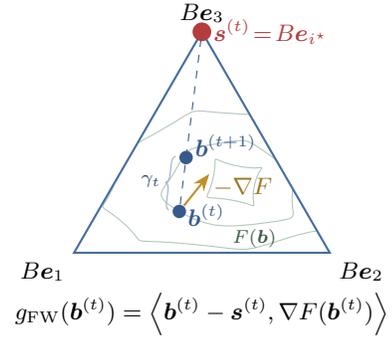

\begin{theorem}[FW convergence for relaxed bit allocation \cite{lacoste-julien_fw_nonconvex_2016}]
\label{thm:fw-main}
Assume $\vh_i\neq \vzero$ for all $i$, and let $\{\vb^{(t)}\}_{t\ge 0}$ be generated by~\eqref{eq:fw-update} starting from any $\vb^{(0)}\in\mathcal{B}$. Define $h_0 := F(\vb^{(0)})-\min_{\vb\in\mathcal{B}} F(\vb)$. Then
\begin{equation}
\label{eq:fw-gap-rate}
\min_{0\le t\le T}
g_{\mathrm{FW}}(\vb^{(t)})
\le
\frac{\max\{2h_0,\; 2LB^2\}}{\sqrt{T+1}}.
\end{equation}
To achieve $g_{\mathrm{FW}}(\vb^{(t)})\le \varepsilon$, it suffices to perform $T+1 \ge \max\{2h_0,2LB^2\}^2/\varepsilon^2$ iterations.
\end{theorem}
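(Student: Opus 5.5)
The plan is the standard nonconvex Frank--Wolfe argument of \cite{lacoste-julien_fw_nonconvex_2016}, specialised with the constants available here. Write $g_t := g_{\mathrm{FW}}(\vb^{(t)})$, $\vd^{(t)}=\vs^{(t)}-\vb^{(t)}$ and $C:=LB^2$. Proposition~\ref{prop:lipschitz-grad} makes $\nabla F$ $L$-Lipschitz on the convex set $\mathcal{B}$, which gives the descent lemma
\[
F(\vb^{(t+1)}) \le F(\vb^{(t)}) + \gamma\ip{\nabla F(\vb^{(t)})}{\vd^{(t)}} + \tfrac{L}{2}\gamma^2\|\vd^{(t)}\|_2^2
\]
for $\gamma\in[0,1]$. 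By \eqref{eq:fw-gap-def}, $\ip{\nabla F(\vb^{(t)})}{\vd^{(t)}} = -g_t$. Since $\mathrm{diam}(\mathcal{B})=\sqrt{2}B$, we have $\|\vd^{(t)}\|_2^2\le 2B^2$. Together these give the per-step inequality
\[
F(\vb^{(t+1)}) \le F(\vb^{(t)}) - \gamma_t g_t + \gamma_t^2 C .
\]
Feasibility of every iterate follows from convexity of $\mathcal{B}$. Lemma~\ref{lemma:grad-b} and Corollary~\ref{cor:fw-gap-explicit} only guarantee that the LMO is the closed form used by the algorithm.

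Next, I would plug in the step size \eqref{eq:fw-update} and get a per-step decrease bounded below in terms of $g_t$. The bound holds for any $\gamma\in[0,1]$, so I minimise its right-hand side over $\gamma$.
\begin{itemize}
\item If $g_t\le 2C$, the unconstrained minimiser $g_t/(2C)$ lies in $[0,1]$ and the decrease is $g_t^2/(4C)$.
\item If $g_t>2C$, take $\gamma_t=1$. The decrease is $g_t - C \ge g_t/2$.
\end{itemize}
In both cases
\[
F(\vb^{(t)})-F(\vb^{(t+1)}) \ge \min\Bigl\{\frac{g_t^2}{4C},\ \frac{g_t}{2}\Bigr\}.
\]
The normalisation of the step size in \eqref{eq:fw-update} has to be checked against this, since a factor of $2$ could shift.

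Then I telescope from $t=0$ to $T$. The sum of decreases is at most $F(\vb^{(0)})-F(\vb^{(T+1)})\le h_0$, because $F(\vb^{(T+1)})\ge \min_{\mathcal{B}}F$. With $\tilde g_T:=\min_{0\le t\le T} g_t$, monotonicity of $x\mapsto\min\{x^2/(4C),x/2\}$ gives
\[
(T+1)\min\{\tilde g_T^2/(4C),\ \tilde g_T/2\}\le h_0 .
\]
I split into the two cases of the minimum.
\begin{itemize}
\item If $\tilde g_T\le 2C$, then $\tilde g_T\le 2\sqrt{Ch_0/(T+1)}$. By AM--GM, $2\sqrt{Ch_0}\le \max\{2h_0,2C\}$, so $\tilde g_T \le \max\{2h_0,2C\}/\sqrt{T+1}$.
\item Otherwise $\tilde g_T\le 2h_0/(T+1)\le 2h_0/\sqrt{T+1}$.
\end{itemize}
Both cases give \eqref{eq:fw-gap-rate} with $2C=2LB^2$. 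The iteration count follows by setting the right-hand side of \eqref{eq:fw-gap-rate} at most $\varepsilon$ and solving for $T+1$.

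The main obstacle is bookkeeping of constants rather than anything conceptual. Three things need care: the factor $2$ in the step-size denominator, the $\mathrm{diam}^2=2B^2$ term, and the passage from the two-case decrease to the clean $\max\{2h_0,2LB^2\}$ form. In particular, the regime $\gamma_t=1$ needs its own argument because it yields an $O(1/T)$ rather than an $O(1/\sqrt T)$ bound. I would first verify the case split with $C=LB^2$ exactly as above. If a constant is off, I would use the slack in $2\sqrt{Ch_0}\le\max\{2h_0,2C\}$ (AM--GM actually gives $\le h_0+C$) to absorb it.
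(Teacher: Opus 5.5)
Your proposal is correct and is essentially the paper's proof. The paper invokes \cite[Theorem~1]{lacoste-julien_fw_nonconvex_2016} with the curvature bound $C_F\le L\,\mathrm{diam}(\mathcal{B})^2=2LB^2$, and you write that theorem's descent-and-telescope argument out inline. Your normalisation $C=LB^2$, with $\gamma^2 C$ in place of $\tfrac{\gamma^2}{2}C_F$, makes the clipped minimiser exactly $\min\{g_t/(2LB^2),1\}$, so no factor of $2$ shifts and your case split yields precisely $\max\{2h_0,2LB^2\}/\sqrt{T+1}$.
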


\begin{proof}
By Proposition~\ref{prop:lipschitz-grad}, $F$ is continuously differentiable on $\R^m$ with $L$-Lipschitz gradient on the compact convex set $\mathcal{B}$. The FW curvature constant $C_F$ over $\mathcal{B}$ satisfies
\[
C_F
\le
L\,\mathrm{diam}(\mathcal{B})^2
=
2LB^2,
\]
where $\mathrm{diam}(\mathcal{B})=\sqrt{2}B$ follows from the vertex structure. Applying~\cite[Theorem~1]{lacoste-julien_fw_nonconvex_2016} with $C:=2LB^2$ gives
\[
\min_{0\le t\le T}
g_{\mathrm{FW}}(\vb^{(t)})
\le
\frac{\max\{2h_0,\; C\}}{\sqrt{T+1}}
=
\frac{\max\{2h_0,\; 2LB^2\}}{\sqrt{T+1}},
\]
which is~\eqref{eq:fw-gap-rate}. Rearranging yields the iteration complexity.
\end{proof}

\begin{remark}
\label{rem:fw-interpretation}
Theorem~\ref{thm:fw-main} guarantees convergence to a first-order stationary point at rate $O(\varepsilon^{-2})$. The stationarity measure is the computable FW gap, obtained at no extra cost once the LMO is solved. 
\end{remark}

\begin{remark}[Adaptive step size]
\label{rem:adaptive-step}
In practice, the global bound~$L$ can be conservative.
One may replace the step size in Algorithm~\ref{alg:fw-bit-allocation}
with an adaptive Lipschitz search: initialize $\hat{L}_0=1$;
at each iteration, halve $\hat{L}$, then double until
$F(\vb+\gamma_t\vd^{(t)})\leq F(\vb^{(t)})-\gamma_t g_{\mathrm{FW}}(\vb^{(t)})/2$,
capping at~$L$. Since $\hat{L}_t\leq L$,
Theorem~\ref{thm:fw-main} still applies.
\end{remark}

\begin{algorithm}[t]
\small
\DontPrintSemicolon
\caption{Frank--Wolfe for relaxed bit allocation}
\label{alg:fw-bit-allocation}
\KwIn{$\mH$, $\mC_{\vx}$, $\vkappa$, $B$, $L$, $\varepsilon$, $T_{\max}$}
\KwOut{feasible $\vb^{(t)}\in\mathcal{B}$}

$\vb^{(0)}\gets \vzero$\;
\For{$t=0,1,\dots,T_{\max}$}{
    $\vrho^{(t)} \gets \vkappa \odot 4^{\vb^{(t)}}$\;
    $\mM^{(t)} \gets \mC_{\vx}^{-1} + \mH^\T \diag(\vrho^{(t)}) \mH$\;
    $\mC_{\vepsilon}^{(t)} \gets \bigl(\mM^{(t)}\bigr)^{-1}$\;

    $\vg^{(t)} \gets
    -(\ln 4)\,
    \vrho^{(t)} \odot
    \diag\bigl(\mH(\mC_{\vepsilon}^{(t)})^2\mH^\T\bigr)$\;

    $i_t \gets \argmin_{1\le i\le m} [\vg^{(t)}]_i$\;
    $\vs^{(t)} \gets B\,\ve_{i_t}$\;

    $g_t \gets \ip{\vb^{(t)}-\vs^{(t)}}{\vg^{(t)}}$\;

    \lIf{$g_t \le \varepsilon$}{
        \Return{$\vb^{(t)}$}
    }

    $\gamma_t \gets \min\left\{\dfrac{g_t}{2LB^2},\,1\right\}$\;
    $\vb^{(t+1)} \gets (1-\gamma_t)\vb^{(t)} + \gamma_t \vs^{(t)}$\;
}
\Return{$\vb^{(T_{\max}+1)}$}\;
\end{algorithm}

\subsection{Interior Point Method with Analytic Gradient}
\label{sec:ip-lbfgs}

The gradient formula from Lemma~\ref{lemma:grad-b} can also be supplied to general purpose nonlinear programming solvers. We use the interior point solver Ipopt~\cite{wachter_implementation_2006} via JuMP~\cite{DunningHuchetteLubin2017} with the option \texttt{hessian\_approximation = ``limited-memory''}, which activates an L-BFGS Hessian approximation. Ipopt replaces the constrained problem~\eqref{eq:bit-alloc-relaxed} with a sequence of log-barrier subproblems
\[
\min_{\vb}\; F(\vb) - \mu \sum_{i=1}^m \ln b_i - \mu \ln\!\bigl(B - \vone^\T \vb\bigr),
\]
where $\mu > 0$ is driven to zero. Each subproblem is solved by a damped Newton method on the primal-dual KKT system. Since the constraint structure is simple ($m$ bound constraints plus one linear budget constraint), the KKT system is inexpensive to solve once the gradient is available.

The Hessian $\nabla^2 F(\vb)$ is costly to compute exactly. Instead, we use the limited-memory BFGS (L-BFGS) approximation, which maintains a low-rank estimate from recent gradient differences. With L-BFGS, each iteration requires only the gradient $\nabla F(\vb)$ from Lemma~\ref{lemma:grad-b}---the same $O(d^3)$ Cholesky factorization used by Frank--Wolfe---plus $O(m)$ work for the L-BFGS update and KKT solve. 

\begin{remark}[Comparison of per-iteration costs]
\label{rem:per-iter-cost}
Both Frank--Wolfe and the interior point method are bottlenecked by the same $O(d^3)$ Cholesky factorization for the gradient. FW adds an $O(m)$ LMO step, whereas Ipopt adds an $O(m)$ KKT solve. The key difference is iteration count; while FW and interior point methods both globally converge sublinearly at rate $O(1/\sqrt{T})$, IPMs also exhibit local superlinear convergence, typically requiring 20--50 iterations. FW provides a computable convergence certificate (the FW gap), while Ipopt reports KKT residuals.
\end{remark}

\section{Rounding Procedure}
\label{sec:rounding}

The relaxed problem~\eqref{eq:bit-alloc-relaxed} returns a continuous solution
$\bar{\vb}\in\mathcal{B}$, whereas the original problem~\eqref{eq:bit-alloc-integer}
requires $\vb\in\Z_+^m$. We compute
\[
\vr:=\bar{\vb}-\lfloor \bar{\vb}\rfloor \in [0,1)^m,
\quad
R_{\mathrm{rem}}:=\vone^\T \vr,
\]
where $\lfloor\cdot\rfloor$ is computed componentwise.
Since $\vone^\T\bar{\vb}=B$ by Proposition~\ref{prop:budget-saturation} and $\vone^\T\lfloor \bar{\vb}\rfloor\in\Z_+$,
the residual budget is an integer satisfying
$0\le R_{\mathrm{rem}}\le m-1$.

Similarly to the Quota Method for apportionment \cite{balinski_quota_1975} and the rounding procedures for approximate experimental designs~\cite{pukelsheim_efficient_1992}, we use \textit{largest remainder rounding} to round up the $R_{\mathrm{rem}}$ largest
components of $\vr$ and round the rest down. Equivalently,  
\begin{equation}
\label{eq:largest-remainder-def}
\vxi
\in
\argmax_{\substack{\vxi\in\{0,1\}^m\\ \vone^\T \vxi = R_{\mathrm{rem}}}}
\ip{\vr}{\vxi},
\quad
\hat{\vb}
:=
\lfloor \bar{\vb}\rfloor + \vxi.
\end{equation}

\begin{algorithm}[t]
\small
\DontPrintSemicolon
\caption{Largest remainder rounding procedure}
\label{alg:largest-remainder-rounding}
\KwIn{continuous solution $\bar{\vb}\in\mathcal{B}$ with $\vone^\T\bar{\vb}=B$}
\KwOut{integral $\hat{\vb}\in\Z_+^m$ with $\vone^\T\hat{\vb}=B$}

$\vb^{(0)} \gets \lfloor \bar{\vb}\rfloor$\;
$\vr \gets \bar{\vb}-\lfloor \bar{\vb}\rfloor$\;
$R_{\mathrm{rem}} \gets B-\vone^\T\vb^{(0)}$\;
$\hat{\vb} \gets \vb^{(0)}$\;

find a set $S \subseteq \{1,\dots,m\}$ with $|S|=R_{\mathrm{rem}}$ such that
\[
S \in \argmax_{\substack{T\subseteq\{1,\dots,m\}\\ |T|=R_{\mathrm{rem}}}}
\sum_{i\in T} r_i
\]
(i.e., $S$ indexes the $R_{\mathrm{rem}}$ largest components of $\vr$)\;

\ForEach{$i\in S$}{
    $\hat b_i \gets \hat b_i + 1$\;
}
\Return{$\hat{\vb}$}\;
\end{algorithm}

\begin{proposition}[Feasibility and nearest point property of largest remainder rounding \cite{balinski_quota_1975}]
\label{prop:largest-remainder-distance}
Let $\bar{\vb}\in\mathcal{B}$ satisfy $\vone^\T \bar{\vb}=B$, and define
$\vr=\bar{\vb}-\lfloor \bar{\vb}\rfloor$ and $R_{\mathrm{rem}}=\vone^\T \vr$.
Then $\hat{\vb}$ defined by~\eqref{eq:largest-remainder-def}
belongs to $\Z_+^m$, satisfies $\vone^\T \hat{\vb}=B$, and solves
\begin{equation}
\label{eq:largest-remainder-nearest}
\hat{\vb}
\in
\argmin_{\substack{\hat{\vb}\in\Z_+^m,\\
\hat{\vb}=\lfloor \bar{\vb}\rfloor+\vxi}}
\|\hat{\vb}-\bar{\vb}\|_2^2 \ \text{\rm s.t. } \vxi\in\{0,1\}^m,\ 
\vone^\T\vxi=R_{\mathrm{rem}}.
\end{equation}
Moreover,
\begin{equation}
\label{eq:largest-remainder-distance-bound}
\|\hat{\vb}-\bar{\vb}\|_2^2
\le
\sum_{i=1}^m r_i(1-r_i).
\end{equation}
\end{proposition}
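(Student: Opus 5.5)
The plan is to reduce everything to a coordinatewise computation in terms of the remainders $\vr$. Throughout I take $B\in\Z_+$, as in the integer program~\eqref{eq:bit-alloc-integer}.

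\emph{Feasibility.} Since $\bar{\vb}\ge\vzero$, we have $\lfloor\bar{\vb}\rfloor\in\Z_+^m$. Adding $\vxi\in\{0,1\}^m$ keeps the point in $\Z_+^m$. For the sum, $\vone^\T\hat{\vb}=\vone^\T\lfloor\bar{\vb}\rfloor+R_{\mathrm{rem}}$, and $R_{\mathrm{rem}}=\vone^\T\vr=B-\vone^\T\lfloor\bar{\vb}\rfloor$, so $\vone^\T\hat{\vb}=B$. The same identity shows that $R_{\mathrm{rem}}$ is an integer in $[0,m)$. Hence the constraint set $\{\vxi\in\{0,1\}^m:\vone^\T\vxi=R_{\mathrm{rem}}\}$ is nonempty and~\eqref{eq:largest-remainder-def} is well defined.

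\emph{Nearest-point property.} For any admissible $\vxi$ we have $\hat{\vb}-\bar{\vb}=\vxi-\vr$. Using $\xi_i^2=\xi_i$,
\[
\|\vxi-\vr\|_2^2=\sum_i r_i^2+\vone^\T\vxi-2\ip{\vr}{\vxi}=\sum_i r_i^2+R_{\mathrm{rem}}-2\ip{\vr}{\vxi}.
\]
The first two terms do not depend on $\vxi$ over the feasible set. Therefore minimizing the squared distance in~\eqref{eq:largest-remainder-nearest} is equivalent to maximizing $\ip{\vr}{\vxi}$, which is exactly~\eqref{eq:largest-remainder-def}. Moreover, an exchange argument shows that this maximum is attained by the indicator of the $R_{\mathrm{rem}}$ largest entries of $\vr$.

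\emph{Distance bound.} Substituting the expansion above and using $\sum_i r_i=R_{\mathrm{rem}}$, the target~\eqref{eq:largest-remainder-distance-bound} reads
\[
\sum_i r_i^2+R_{\mathrm{rem}}-2\ip{\vr}{\vxi}\le R_{\mathrm{rem}}-\sum_i r_i^2 .
\]
This is equivalent to $\ip{\vr}{\vxi}\ge\ip{\vr}{\vr}$. This comparison is the one nontrivial step, and I would handle it by a linear programming relaxation.

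Consider the polytope $P=\{\vxi\in[0,1]^m:\vone^\T\vxi=R_{\mathrm{rem}}\}$. Its vertices are exactly the $0/1$ vectors with $R_{\mathrm{rem}}$ ones, since the constraint matrix is totally unimodular and $R_{\mathrm{rem}}$ is an integer. Hence $\max_{P}\ip{\vr}{\cdot}$ is attained at an integral point and equals the largest-remainder value $\ip{\vr}{\vxi}$.

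The vector $\vr$ itself lies in $P$, because $\vr\in[0,1)^m$ and $\vone^\T\vr=R_{\mathrm{rem}}$. It follows that $\ip{\vr}{\vxi}\ge\ip{\vr}{\vr}=\sum_i r_i^2$, which completes the bound.

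Alternatively, one can avoid the linear programming argument with a direct rearrangement. The top-$R_{\mathrm{rem}}$ sum of $\vr$ dominates $\sum_i r_i\cdot r_i$, because the weights $r_i\in[0,1]$ sum to $R_{\mathrm{rem}}$, and such a weighted sum can never exceed the sum of the $R_{\mathrm{rem}}$ largest entries.
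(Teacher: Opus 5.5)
Your proof is correct and follows the paper's route: the same feasibility computation, the same expansion $\|\vxi-\vr\|_2^2=\|\vr\|_2^2+R_{\mathrm{rem}}-2\ip{\vr}{\vxi}$ reducing the nearest-point claim to~\eqref{eq:largest-remainder-def}, and the same observation that $\vr$ is feasible for the continuous relaxation, which yields $\ip{\vr}{\vxi}\ge\|\vr\|_2^2$. You also justify that the relaxed maximum equals the integer one, via integrality of the hypersimplex when $R_{\mathrm{rem}}$ is an integer, or via the rearrangement argument; the paper uses this step without proof, so your version is slightly more complete rather than a different approach.
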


\begin{proof}
By construction, $\vxi\in\{0,1\}^m$ and
$\vone^\T \vxi=R_{\mathrm{rem}}$, so
$\hat{\vb}=\lfloor \bar{\vb}\rfloor+\vxi\in\Z_+^m$
and
\[
\vone^\T \hat{\vb}
=
\vone^\T \lfloor \bar{\vb}\rfloor + \vone^\T \vxi
=
(B-R_{\mathrm{rem}})+R_{\mathrm{rem}}
=
B.
\]

Now let $\hat{\vb}=\lfloor \bar{\vb}\rfloor+\vxi$ with
$\vxi\in\{0,1\}^m$ and $\vone^\T \vxi=R_{\mathrm{rem}}$. Since
$\bar{\vb}=\lfloor \bar{\vb}\rfloor+\vr$, we obtain
\[
\|\hat{\vb}-\bar{\vb}\|_2^2
=
\|\vxi-\vr\|_2^2
=
\|\vr\|_2^2 + \|\vxi\|_2^2 - 2\ip{\vr}{\vxi}.
\]
Because $\vxi\in\{0,1\}^m$ with $\vone^\T\vxi=R_{\mathrm{rem}}$, we have
$\|\vxi\|_2^2 = R_{\mathrm{rem}}$. Hence
\[
\|\hat{\vb}-\bar{\vb}\|_2^2
=
\|\vr\|_2^2 + R_{\mathrm{rem}} - 2\ip{\vr}{\vxi}.
\]
Therefore minimizing $\|\hat{\vb}-\bar{\vb}\|_2^2$ over all feasible
$\vxi$ is equivalent to maximizing $\ip{\vr}{\vxi}$, which proves~\eqref{eq:largest-remainder-nearest}.

For~\eqref{eq:largest-remainder-distance-bound}, note that
$\vr\in[0,1]^m$ and $\vone^\T\vr=R_{\mathrm{rem}}$, so $\vr$ is feasible for the
continuous relaxation of the maximization problem in~\eqref{eq:largest-remainder-def}. Hence
\(
\ip{\vr}{\vxi}
\ge
\ip{\vr}{\vr}
=
\|\vr\|_2^2.
\)
Substituting into the norm identity above gives
\begin{align*}
\|\hat{\vb}-\bar{\vb}\|_2^2
&=
\|\vr\|_2^2 + R_{\mathrm{rem}} - 2\ip{\vr}{\vxi}\\
&\le
R_{\mathrm{rem}}-\|\vr\|_2^2
=
\sum_{i=1}^m r_i(1-r_i),
\end{align*}
as claimed.
\end{proof}

\begin{theorem}[Rounding gap for largest remainder rounding]
\label{thm:largest-remainder-gap}
Assume $\vh_i\neq \vzero$ for all $i$, and let $\bar{\vb}\in\mathcal{B}$
satisfy $\vone^\T\bar{\vb}=B$. Suppose $\bar{\vb}$ is a KKT point of the relaxed problem~\eqref{eq:bit-alloc-relaxed}. Then the largest remainder rounding
$\hat{\vb}$ defined by~\eqref{eq:largest-remainder-def} satisfies
\begin{equation}
\label{eq:largest-remainder-gap}
F(\hat{\vb})
\le
F(\bar{\vb})
+
\frac{L}{2}
\sum_{i=1}^m r_i(1-r_i).
\end{equation}
In particular,
\begin{equation}
\label{eq:largest-remainder-gap-simplified}
F(\hat{\vb})
\le
F(\bar{\vb})
+
\frac{L}{2} \min\left\{R_{\mathrm{rem}}, m/4\right\}.
\end{equation}
\end{theorem}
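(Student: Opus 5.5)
The plan is to combine the standard descent lemma for functions with Lipschitz gradient with the first-order (KKT) conditions at $\bar{\vb}$, so that the linear term of the expansion vanishes. Then the quadratic term is controlled by the distance bound of Proposition~\ref{prop:largest-remainder-distance}.

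\textbf{Step 1 (descent lemma).} By Proposition~\ref{prop:largest-remainder-distance}, $\hat{\vb}\in\Z_+^m$ and $\vone^\T\hat{\vb}=B$, so $\hat{\vb}\in\mathcal{B}$. Since $\mathcal{B}$ is convex, the segment $[\bar{\vb},\hat{\vb}]$ lies in $\mathcal{B}$. Proposition~\ref{prop:lipschitz-grad} says $\nabla F$ is $L$-Lipschitz on $\mathcal{B}$, so integrating along this segment gives
\[
F(\hat{\vb})\le F(\bar{\vb})+\ip{\nabla F(\bar{\vb})}{\hat{\vb}-\bar{\vb}}+\tfrac{L}{2}\|\hat{\vb}-\bar{\vb}\|_2^2 .
\]
Bound~\eqref{eq:largest-remainder-distance-bound} then converts the last term into $\tfrac{L}{2}\sum_i r_i(1-r_i)$. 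It remains to show the linear term is $\le 0$.

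\textbf{Step 2 (KKT structure).} I would write the KKT conditions for minimizing $F$ over $\mathcal{B}$. With a multiplier $\lambda\ge0$ for the budget constraint and $\vmu\ge\vzero$ for the bounds $\vb\ge\vzero$, they read
\[
\nabla F(\bar{\vb})=\vmu-\lambda\vone,\qquad \mu_i\bar b_i=0 .
\]
Set $\vct{\delta}:=\hat{\vb}-\bar{\vb}=\vxi-\vr$. Both $\hat{\vb}$ and $\bar{\vb}$ have coordinate sum $B$, so $\vone^\T\vct{\delta}=0$. Hence
\[
\ip{\nabla F(\bar{\vb})}{\vct{\delta}}=\sum_i \mu_i\delta_i .
\]
Complementary slackness leaves only indices with $\bar b_i=0$. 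For those indices $r_i=0$, so $\delta_i=\xi_i\ge0$, and these terms are a priori nonnegative.

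\textbf{Step 3 (main obstacle: $\xi_i=0$ where $\bar b_i=0$).} This is the crux, and I would settle it with a counting argument on the remainders. If $R_{\mathrm{rem}}=0$ then $\vxi=\vzero$ and there is nothing to prove. Otherwise, since each $r_i<1$ and $\sum_i r_i=R_{\mathrm{rem}}$, at least $R_{\mathrm{rem}}+1$ coordinates have $r_i>0$. Consequently the $R_{\mathrm{rem}}$ largest remainders are all strictly positive, so the argmax in~\eqref{eq:largest-remainder-def} never selects an index with $r_i=0$. More directly: swapping a selected zero-remainder index for an unselected positive one strictly increases $\ip{\vr}{\vxi}$, so no maximizer selects a zero-remainder index. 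Thus $\xi_i=0$ whenever $\bar b_i=0$, every surviving term $\mu_i\delta_i$ vanishes, and the linear term equals $0$. This yields~\eqref{eq:largest-remainder-gap}.

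\textbf{Step 4 (simplified bound).} For~\eqref{eq:largest-remainder-gap-simplified}, I would use two elementary facts about $r_i\in[0,1)$. First, $r_i(1-r_i)\le 1/4$, so the sum is at most $m/4$. Second, $r_i(1-r_i)\le r_i$, so the sum is at most $\sum_i r_i=R_{\mathrm{rem}}$. Taking the smaller of the two gives the stated minimum.
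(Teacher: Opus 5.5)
Your proposal is correct and follows the paper's proof essentially step for step: the descent lemma on the convex set $\mathcal{B}$, KKT multipliers together with $\vone^\T\boldsymbol{\delta}=0$ to make the linear term vanish, Proposition~\ref{prop:largest-remainder-distance} for the quadratic term, and the bounds $r_i(1-r_i)\le r_i$ and $r_i(1-r_i)\le 1/4$ for the simplified form. Your counting and swap argument in Step 3 actually proves what the paper only asserts, namely that largest remainder rounding never rounds up a coordinate with $r_i=0$.
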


\begin{proof}
Since $\bar{\vb},\hat{\vb}\in\mathcal{B}$ and $\mathcal{B}$ is convex,
$L$-smoothness of $F$ on $\mathcal{B}$ gives the following quadratic upper bound
\begin{equation}
\label{eq:largest-remainder-descent}
F(\hat{\vb})
\le
F(\bar{\vb})
+
\ip{\nabla F(\bar{\vb})}{\hat{\vb}-\bar{\vb}}
+
\frac{L}{2}\|\hat{\vb}-\bar{\vb}\|_2^2.
\end{equation}
Let
\(
\boldsymbol{\delta}
:=
\hat{\vb}-\bar{\vb}
=
\vxi-\vr.
\)
Since $\vone^\T \vxi = \vone^\T \vr = R_{\mathrm{rem}}$, we have
$\vone^\T \boldsymbol{\delta}=0$.
Let $(\lambda,\vmu)$ be KKT multipliers for the constraints
$\vone^\T \vb \le B$ and $\vb\ge \vzero$, so
\[
\nabla F(\bar{\vb}) + \lambda \vone - \vmu = \vzero,
\quad
\lambda \ge 0,
\quad
\vmu \ge \vzero,
\quad
\mu_i \bar b_i = 0 \quad \forall i.
\]
If $r_i>0$, then $\bar b_i>0$, so complementary slackness gives $\mu_i=0$.
If $r_i=0$, then $\delta_i=\xi_i-r_i=0$ because largest remainder
rounding only rounds up coordinates with positive fractional part. Therefore
\begin{align*}
\ip{\nabla F(\bar{\vb})}{\boldsymbol{\delta}}
&=
\sum_{i=1}^m \left(-\lambda+\mu_i\right)\delta_i \\
&=
-\lambda \sum_{i=1}^m \delta_i =
-\lambda\,\vone^\T\boldsymbol{\delta}
=
0.
\end{align*}
Substituting into~\eqref{eq:largest-remainder-descent} and applying
Proposition~\ref{prop:largest-remainder-distance} yields
\[
F(\hat{\vb})
\le
F(\bar{\vb})
+
\frac{L}{2}\|\hat{\vb}-\bar{\vb}\|_2^2
\le
F(\bar{\vb})
+
\frac{L}{2}\sum_{i=1}^m r_i(1-r_i),
\]
which proves~\eqref{eq:largest-remainder-gap}. The simplified bound
follows from $r_i(1-r_i)\le r_i$ and $r_i(1-r_i)\le 1/4$ for each $i$.
\end{proof}

\begin{remark}
\label{rem:largest-remainder-fw}
The bound in Theorem~\ref{thm:largest-remainder-gap} requires KKT stationarity of
the relaxed solution, which is satisfied by interior point solutions.
Frank--Wolfe only guarantees a small FW gap, so the bound applies approximately. In practice, the bound depending on the global Lipschitz constant can be loose; some future work would be to tighten this bound using local Lipschitz constant or specific structural properties of a given problem's sensing matrix such as sparsity. 
\end{remark}


\begin{figure*}[!tb]
    \centering
    \includegraphics[width=0.98\textwidth]{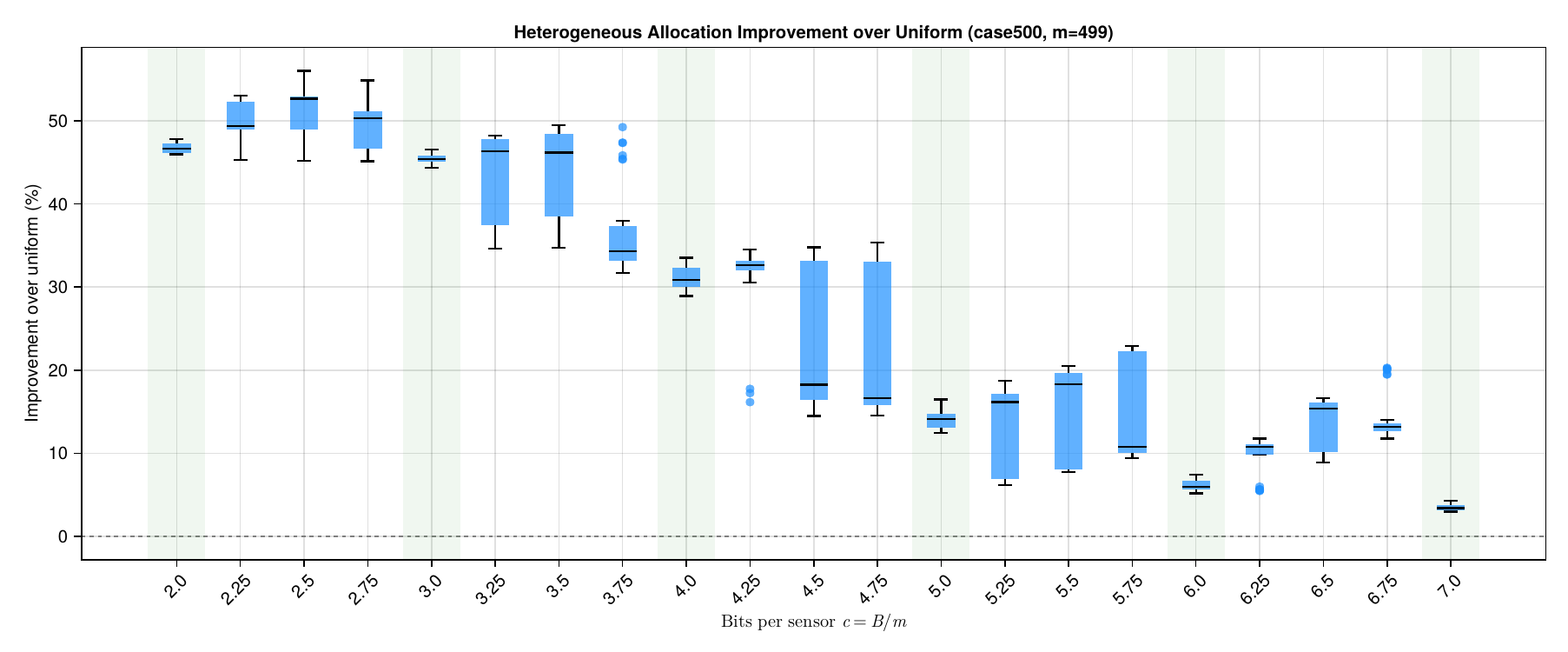}
    \caption{Percentage improvement of the optimized heterogeneous bit allocation~\eqref{eq:bit-alloc-relaxed} over uniform allocation~$\vb = \vone\cdot\lfloor B/m \rfloor$ on the \texttt{case500} test system ($m=499$ sensors). Each boxplot shows 30 randomized instances at the given budget level~$c = B/m$. Shaded columns denote integer values of~$c$.}
    \label{fig:heterogeneous-vs-uniform-case500}
\end{figure*}

\section{Numerical Experiments}
\label{sec:numerical}

We evaluate the proposed Frank--Wolfe algorithm (Algorithm~\ref{alg:fw-bit-allocation}) and compare it against an interior point method (Ipopt) with analytic gradient and L-BFGS Hessian approximation. Both methods use the gradient computation from Lemma~\ref{lemma:grad-b}. All experiments were conducted on an AMD Ryzen 5 7600X3D (6 cores, 4.1\,GHz) with 48\,GB RAM using Julia~1.12.
\subsection{Comparison of Solver Computation Time}

\paragraph{Solver Configuration}
The proposed method uses the FW algorithm with the short step rule~\eqref{eq:fw-update}, which computes the step size as
$\gamma_t = \min\bigl\{ g_t / (2L B^2),\; 1 \bigr\}$
using the analytic Lipschitz constant $L = (\ln 4)^2 \|\mC_x\|_2 (2m+1)$. The solver runs for at most 500 iterations with a duality gap tolerance of $10^{-6}$ and a 600\,s wall-clock time limit. Each iteration requires one Cholesky factorization of the $d \times d$ information matrix $\mM(\vb)$ via LAPACK, and the information matrix is assembled using a BLAS symmetric rank-$k$ update. 

The interior point baseline uses Ipopt~\cite{wachter_implementation_2006} via JuMP~\cite{DunningHuchetteLubin2017} with the analytic gradient from Lemma~\ref{lemma:grad-b} and L-BFGS Hessian approximation, initialized at $\vb = (B/m) \mathbf{1}$ with a 600\,s time limit. Each Ipopt iteration requires the same $O(d^3)$ Cholesky factorization as FW. After the continuous solve, both methods apply the same largest remainder rounding procedure from Algorithm~\ref{alg:largest-remainder-rounding} to obtain integer bit allocations.

\paragraph{Test Cases}
We use 11 power grid test cases from the PGLib-OPF benchmark library~\cite{babaeinejadsarookolaee2021}. The sensing matrix $\mH$ is the grounded bus susceptance matrix from DC power flow with the slack bus removed to ensure positive definiteness. State dimensions range from $d = 13$ (case14) to $d = 299$ (case300), with $m = d$ measurements corresponding to power injections at all non slack buses. For each test case, we generate 30 independent problem instances by randomizing the channel precision constants $\kappa_i \sim \mathrm{Uniform}(0.8, 1.2)$, $i = 1, \ldots, m$. The bit budget is set to $B = 2m$ (2 bits per sensor). The prior covariance is $\mC_{\vx} = \mI$.

Table~\ref{tab:solve_times} summarizes solve times across the 30 randomized instances for each test case. Both methods are bottlenecked by one LAPACK Cholesky factorization of the information matrix $\mM(\vb)$ per iteration, with the gradient extracted via $[\mH \mC_{\vepsilon}^2 \mH^\top]_{ii}$ without forming the full $m \times m$ product. The FW LMO adds only $O(m)$ work per iteration via~\eqref{eq:lmo-solution}; the interior point KKT solve is similarly $O(m)$ with L-BFGS. The speed difference is determined by iteration count; the interior point method converges in 20--50 iterations, while FW often used most of the max 500 iterations. Across all instances where both solvers converged, FW and Ipopt+$\nabla$ returned similar objective values up to the solver tolerances. Table~\ref{tab:solve_times} therefore isolates runtime differences rather than solution quality differences. %

Table~\ref{tab:rounding_gap} reports the median rounding gap $F(\hat{\vb}) - F(\bar{\vb})$ and the theoretical bound from Theorem~\ref{thm:largest-remainder-gap} across six IEEE test cases, each evaluated over 30 random instances with $\kappa_i \sim \mathrm{Uniform}(0.8, 1.2)$. In particular, the ratio of the actual gap to the bound ranges from $10^{-4}$ (case14) to $10^{-6}$ (case240, case300), indicating that the bound is highly conservative in practice. This conservatism may stem from $L$ being a global worst-case Lipschitz constant, whereas a \textit{local} Lipschitz constant could give a tighter bound.

\begin{table}[tb]
    \centering
    \caption{Solve time comparison: Stock Ipopt, FW, and Ipopt $+$ analytic gradient. Mean $\pm$ std.\ dev.\ over 30 trials, 10 min. limit.}
    \label{tab:solve_times}
    \vspace{-4pt}
    \resizebox{\columnwidth}{!}{
    \begin{tabular}{lrrrr}
        \toprule
        \textbf{Case} & \textbf{$m$} & \textbf{Ipopt (s)} & \textbf{FW (s)} & \textbf{Ipopt+$\nabla$ (s)} \\
        \midrule
        case14  & 13  & $5.03  \pm 1.02$ & $0.098 \pm 0.0023$ & $0.033 \pm 0.14$ \\
        case30  & 29  & $2.73  \pm 0.16$ & $0.24 \pm 0.0093$ & $0.014 \pm 0.0010$ \\
        case57  & 56  & $7.52  \pm 0.31$ & $0.89 \pm 0.030$ & $0.022 \pm 0.0020$ \\
        case73  & 72  & $12.1  \pm 0.97$ & $1.76 \pm 0.090$ & $0.026 \pm 0.0040$ \\
        case118 & 117 & $121.2  \pm 12.6$ & $7.64 \pm 0.22$ & $0.052 \pm 0.0040$ \\
        case162 & 161 & $252.9  \pm 26.1$ & $27.44 \pm 1.1$ & $0.070 \pm 0.0060$ \\
        case179 & 178 & $442.7  \pm 80.3$ & $41.05 \pm 1.6$ & $0.16 \pm 0.032$ \\
        case197 & 196 & $>600$ (timeout) & $50.67 \pm 1.5$ & $0.16 \pm 0.020$ \\
        case200 & 199 & $>600$ (timeout) & $52.59 \pm 2.7$ & $0.12 \pm 0.028$ \\
        case240 & 239 & $>600$ (timeout) & $77.82 \pm 2.4$ & $0.11 \pm 0.056$ \\
        case300 & 299 & $>600$ (timeout) & $207.46 \pm 8.6$ & $0.43 \pm 0.062$ \\
        \bottomrule
    \end{tabular}}
\end{table}

\begin{table}[tb]
    \centering
    \caption{Median experimental rounding quality vs theoretical bound over 30 random
    instances and $B = 2m$ total bit budget.}
    \label{tab:rounding_gap}
    \vspace{-4pt}
    \resizebox{\columnwidth}{!}{
    \begin{tabular}{lrrrr}
        \toprule
        \textbf{Case} & \textbf{$m$} & \textbf{$F(\hat{\vb})-F(\bar{\vb})$}
            & \textbf{Rounding bound~\eqref{eq:largest-remainder-gap}} & \textbf{Median ratio} \\
        \midrule
        case14  & 13  & $1.92\times10^{-2}$ & $7.26\times10^{1}$  & $2.62\times10^{-4}$ \\
        case30  & 29  & $2.48\times10^{-2}$ & $2.52\times10^{2}$  & $9.98\times10^{-5}$ \\
        case57  & 56  & $4.13\times10^{-2}$ & $1.00\times10^{3}$  & $4.06\times10^{-5}$ \\
        case200 & 199 & $5.76\times10^{-2}$ & $1.34\times10^{4}$  & $4.28\times10^{-6}$ \\
        case240 & 239 & $1.80\times10^{-2}$ & $1.57\times10^{4}$  & $1.15\times10^{-6}$ \\
        case300 & 299 & $8.38\times10^{-2}$ & $2.73\times10^{4}$  & $3.05\times10^{-6}$ \\
        \bottomrule
    \end{tabular}}
\end{table}


\begin{figure}[ht!]
    \centering
    \includegraphics[width=0.98\linewidth]{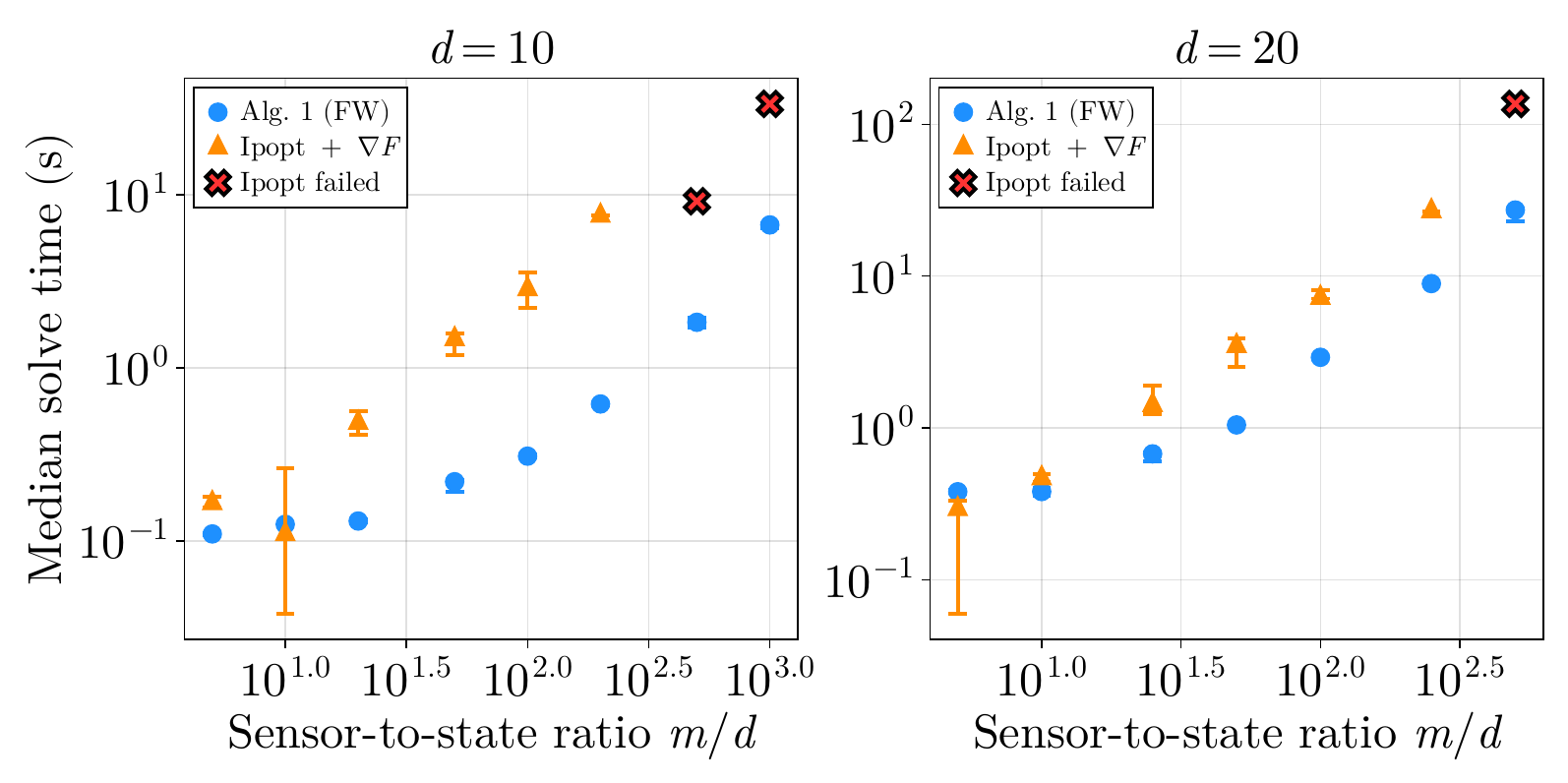}
    \caption{Median solve time vs. sensor-to-state ratio $m/d$ for
    $d \in \{10, 20\}$ with $B = 2d$.  Error bars show the
    IQR over 30 instances. Frank--Wolfe
    scales linearly in $m$ (for $m \gg d$) with $O(d^3 + d^2 m)$ per-iteration
    cost. Ipopt fails to converge entirely beyond
    $m/d \approx 200$ ($d{=}10$) and $100$ ($d{=}20$).}
    \label{fig:sensor-scaling}
\end{figure}

In Figure \ref{fig:sensor-scaling}, we analyzed a ``sensor rich'' ($m \gg d$) regime in which FW outperforms Ipopt with gradient acceleration in computation time. We generate random Gaussian instances with $\mH \in \mathbb{R}^{m \times d}$, $\kappa_i \sim \mathcal{U}(0.8, 1.2)$, and bit budget $B = 2d$. For each state dimension $d \in \{10, 20\}$, we sweep the sensor-to-state ratio $m/d$ from 5 to 1000 and solve 30 independent instances per configuration using both Algorithm~1 and Ipopt with analytic gradients and L-BFGS. Median solve times and ranges are reported; red \texttimes{} markers indicate configurations where Ipopt failed to converge on all 30 instances.

\subsection{Comparison of Problem Formulations}
\label{sec:heterogeneous-value}

We evaluate the benefit of solving~\eqref{eq:bit-alloc-relaxed} relative to the uniform baseline~$\vb = \vone \cdot \lfloor B/m \rfloor$ on the \texttt{case500} power system test case ($m = d=499$) across per-sensor budgets~$c = B/m \in [2, 7]$. For each budget level, 30 instances are generated with quantization gains~$\kappa_i \sim \mathrm{Uniform}(0.8, 1.2)$, and the continuous relaxation is solved with Ipopt+$\nabla$ method followed by largest remainder rounding.

As shown in Figure~\ref{fig:heterogeneous-vs-uniform-case500}, heterogeneous allocation yields median improvements of approximately~$47\%$ at~$c = 2$, rising to~$53\%$ at~$c = 2.5$ before settling near~$50\%$ and~$45\%$ at~$c = 2.75$ and~$c = 3$, respectively, then declining to~$31\%$ at~$c = 4$ and~$14\%$ at~$c = 5$, and falling to~$3.4\%$ at~$c = 7$ as abundant access to bandwidth levels the playing field. The gains at integer~$c$ remain large (e.g., $45\%$ at~$c = 3$, $31\%$ at~$c = 4$), confirming that the dominant source of improvement is the optimizer's ability to concentrate precision on high information sensors rather than an artifact of baseline rounding. These results demonstrate that heterogeneous allocation is most valuable in the bandwidth-constrained regime that motivates this work.

\section{Conclusions}
\label{sec:conclusion}

We proposed an optimal bit allocation method for state estimation with variable precision measurements. The method reduces the MSE of a state estimator by controlling how a limited number of bits are allocated to sensors.

\paragraph{Discussion}
Two algorithms were proposed: a first-order and second-order method to solve the corresponding nonconvex optimization problem. The algorithms demonstrated the ability to heterogeneously allocate a limited bit budget across quantized measurements, significantly reducing the $A$-optimal design criterion for the state estimator. We derived a closed-form gradient formula that reduces computing the gradient to a single Cholesky factorization. We presented two methods that exploit this gradient: a Frank--Wolfe method with a closed-form LMO and guaranteed $O(1/\sqrt{T})$ convergence rate guarantee in terms of the FW gap, and an interior point method with L-BFGS Hessian approximation. Numerical experiments on IEEE power grid test cases show that the interior point method converges in fewer iterations, as expected for a second-order method. Additionally, the Frank--Wolfe method, being a first order method compared to a second order interior point method, is more memory efficient, allowing it to scale to large problems where storing the L-BFGS Hessian itself becomes infeasible. 

\paragraph{Future work}
This work opens a wide array of future directions. This includes adding stochastic trace estimation for scaling to larger problem sizes, such as the classical Hutchinson estimator or more modern approaches~\cite{meyer_hutch_nodate}. This paper has thus far considered only an offline state estimation problem. Future work should extend these methods to dynamic bit allocation in time-varying systems. Another promising direction is the consideration of a broader class of objectives that measure communication cost and performance criteria complementary to the $A$-optimal design criterion. Future work will consider general classes of convex communication cost functionals, and applying decision-focused learning (DFL) principles to optimize over the dynamic ranges of the channels. 


\bibliographystyle{IEEEtran}
\bibliography{cdcrefs}

\end{document}